\newtheorem{theo}{Theorem}[section]
\newtheorem{theorem}[theo]{Theorem}
\newtheorem{lemma}[theo]{Lemma}
\newtheorem{remark}[theo]{Remark}
\newtheorem{example}[theo]{Example}
\newcommand{\RR}{\mathbb{R}}
\newtheorem{assumption}{Assumption}[section]
\def\gradest{\widehat{\nabla\log{\pi}}}
\newcommand{\ee}{\end{equation}}
\newcommand{\bes}{\begin{equation*}}
\newcommand{\ees}{\end{equation*}}
\newcommand{\bea}{\begin{eqnarray}}
\newcommand{\eea}{\end{eqnarray}}
\newcommand{\R}{\mathbb{R}}
\DeclareMathOperator{\Var}{Var}
\newcommand{\norm}[1]{\left\Vert #1 \right\Vert}
\newcommand{\transpose}{T}
\newcommand{\E}{\mathbb{E}}
\def\IR{\mathbb R}
\def\IN{\mathbb N}
\def\IE{\mathbb E}
\def\IT{\mathbb T}
\global\long\def\poisol{\psi}
\global\long\def\step{k}
\global\long\def\EE{\mathbb{E}}
\global\long\def\EEnoise{\mathbb{E}_{\{\theta_i\}_i}}
\global\long\def\RR{\mathbb{R}}
\global\long\def\UU{\mathcal{U}}
\title{Exploration of the (Non-)asymptotic Bias and Variance of Stochastic Gradient Langevin Dynamics}
\author[1]{Sebastian J. Vollmer\thanks{ vollmer@stats.ox.ac.uk}}
\author[2]{Konstantinos C. Zygalakis \thanks{k.zygalakis@soton.ac.uk}}
\author[1]{Yee Whye Teh\thanks{y.w.teh@stats.ox.ac.uk}}
\affil[1]{Department of Statistics, University of Oxford}
\affil[2]{Mathematical Sciences, University of Southampton.}
\newcommand{\deriv}[2]{{#2}^{(#1)}}
\newcommand{\hess}{\nabla^2}
\newcommand{\derivf}{\nabla}
\begin{document}

\maketitle

Applying standard Markov chain Monte Carlo (MCMC) algorithms to large data sets is computationally infeasible. 
The recently proposed stochastic gradient Langevin dynamics (SGLD) method circumvents this problem in three ways: it generates proposed moves using only a subset of the data, it skips the Metropolis-Hastings accept-reject step, and it uses sequences of decreasing step sizes. In \cite{TehThierryVollmerSGLD2014}, we provided the mathematical foundations for the decreasing step size SGLD, including consistency and a central limit theorem. However, in practice the SGLD is run for a relatively small number of iterations, and its step size is not decreased to zero. The present article investigates the behaviour of the SGLD with fixed step size.  In particular we characterise the asymptotic bias explicitly, along with its dependence on the step size and the variance of the stochastic gradient. On that basis  a modified SGLD which removes the asymptotic bias due to the variance of the stochastic gradients up to first order in the step size is derived. Moreover, we are able to obtain bounds on the finite-time bias, variance and mean squared error (MSE). The theory is illustrated with a Gaussian toy model for which the bias and the MSE for the estimation of moments can be obtained explicitly. For this toy model we study the gain of the SGLD over the standard Euler method in the limit of large data sets.

\medskip
\noindent \textbf{Keywords:} Markov Chain Monte Carlo, Langevin Dynamics, Big Data, Fixed step size

\section{Introduction}
A standard approach to estimating expectations under a given target density $\pi(\theta)$ is to construct and simulate from Markov chains whose equilibrium distributions are designed to be $\pi$ \cite{MCMC}.  A well-studied approach, for example in molecular dynamics \cite{LM13,BRV10} and throughout Bayesian statistics \cite{MiT07,NealHMC}, is to use Markov chains constructed as numerical schemes which approximate the time dynamics of stochastic differential equations (SDEs). 
In this paper we will focus on the case of first order Langevin dynamics, which has the form
\begin{equation}
d\theta(t)=\frac{1}{2}\nabla\log{\pi(\theta(t))}dt+dW_{t}, \label{eq:Langevin}
\end{equation}
where $t\in \IR_+$, $\theta\in\IR^{d}$ and $W_{t}$ is a $d$-dimensional standard Brownian motion.
Under appropriate assumptions on $\pi(\theta)$,  it is possible to show that the dynamics generated by Equation \eqref{eq:Langevin} are ergodic with respect to $\pi(\theta)$. 

The simplest possible numerical scheme for approximating Equation \eqref{eq:Langevin}  is the Euler-Maruyama method. Let $h>0$ be a step size.  Abusing notation, the diffusion $\theta(k\cdot h)$ at time $k\cdot h$ is approximated by  $\theta_k$, which is obtained using the following recursion equation   
\begin{equation} \label{eq:EM}
\theta_{k+1}=\theta_{k}+\frac{h}{2}\nabla \log \pi(\theta_k)+ \sqrt{h}\xi_{k},
\end{equation}
where $\xi_{k}$ is a standard Gaussian random variable on $\IR^{d}$. One can use the numerical trajectories generated by this scheme
for the construction of an empirical measure $\pi_{h}(\theta)$ either by averaging over one single long trajectory or by averaging over many realisations in order to obtain a finite ensemble average (see for 
example \cite{MiT07}). However, as discussed in \cite{RT96}, one needs to be careful  when doing this as it could be the case that the discrete Markov chain generated by Equation \eqref{eq:EM} is not ergodic. But even if the resulting Markov Chain is ergodic, $\pi_{h}(\theta)$ will not be equal to $\pi(\theta)$ \cite{2010MattinglyPoisson,AVZ14} which thus implies that the resulting sample average is biased.  
An alternative strategy that avoids this discretization bias and the ergodicity of the numerical procedure, is to use Equation \eqref{eq:EM} as a proposal for a Metropolis-Hastings (MH) MCMC algorithm \cite{MCMC11}, with an additional accept-reject step which corrects the discretization error.

In this paper we are interested in situations where $\pi$ arises as the posterior in a Bayesian inference problem with prior density $\pi_{0}(\theta)$ and a large number $N\gg 1$ of i.i.d.\ observations $X_{i}$ with likelihoods $\pi(X_{i}|\theta)$. In this case, we can write
\begin{equation}
\pi(\theta)\propto \pi_0(\theta) \prod_{i=1}^N \pi(X_i|\theta),
\end{equation}
and we have the following gradient,
\begin{equation} \label{eq:model_density}
\nabla\log{\pi(\theta)}=\nabla\log{\pi_{0}(\theta)}+\sum_{i=1}^{N}\nabla\log{\pi(X_{i}|\theta)}.
\end{equation}
In these situations each update \eqref{eq:EM} has an impractically high computational cost of $\mathcal{O}(N)$ since it involves computations on all $N$ items in the dataset.  Likewise, each MH accept-reject step is impractically expensive.


In contrast,  the recently proposed stochastic gradient Langevin dynamics (SGLD) \cite{welling2011bayesian} circumvents this problem by generating proposals which are only based on a subset of the data, by skipping the accept-reject step and by using a decreasing step-size sequence $(h_{k})_{k \geq 0}$. In particular one has
\begin{align}
\theta_{k+1}&=\theta_{k}+\frac{h_{k}}{2}\gradest(\theta_k)+\sqrt{h_{k}}\xi_{k},\label{eq:SGLD}\\
\gradest(\theta_k)&= \nabla\log{\pi_{0}(\theta_k)}+\frac{N}{n}\sum_{i=1}^{n}\nabla\log{\pi(X_{\tau_{ki}}|\theta_k)} \label{eq:unbiased}\end{align}
where $\xi_{k}$ are independent standard Gaussian random variables on $\IR^{d}$,
and $\tau_k=(\tau_{k1},\cdots,\tau_{kn})$ is a random subset of $[N]:=\{1,\cdots,N\}$ of size $n$,
generated, for example, by sampling with or without replacement from
$[N]$. The idea behind this algorithm is that, since the stochastic gradient appearing in Equation \eqref{eq:SGLD} is an unbiased estimator
of the true gradient $\nabla\log{\pi(\theta)}$, the additional perturbation due to the gradient stochasticity is of order $h$, smaller than the $\sqrt{h}$ order of the injected noise, and so the limiting dynamics ($k\rightarrow\infty$)
of Equation \eqref{eq:SGLD} should behave similarly to the case $n=N$.  In \cite{TehThierryVollmerSGLD2014} it was  shown that in this case that the $K$-step  size weighted sample average is consistent and satisfies a CLT with rate depending on the decay of $h_k$. The optimal rate is limited to $K^{-\frac{1}{3}}$ and achieved by an asymptotic step size decay of $\asymp K^{-\frac{1}{3}}$. 

The problem with decaying step sizes is that the efficiency of the algorithm slows the longer it is run for.  A common practice for the SGLD and its extensions, the Stochastic Gradient Hamiltonian Monte Carlo \cite{Chen2014} and the Stochastic Gradient Thermostat Monte Carlo algorithm \cite{HMCT14}, is to use step sizes that are only decreasing  up to a point.  
The primary aim of this paper is to analyse the behaviour of SGLD with fixed step sizes of $h_k=h$.  We provide two complementary analyses in this setting, one asymptotic in nature and one finite time.  Let $\phi:\RR^d\rightarrow\mathbb{R}$ be a test function whose expectation  we are interested in estimating.  Using simulations of the dynamics governed by Equation \eqref{eq:SGLD}, we can estimate the expectation using
\begin{equation}
\EE_\pi[\phi(\theta)] \approx \frac{1}{K}\sum_{k=1}^K \phi(\theta_k) \label{eq:testestimator}
\end{equation}
for some large number of steps $K$.  Our analyses shed light on the behaviour of this estimator.

In the first analysis, we are interested in the asymptotic bias of the estimator \eqref{eq:testestimator} as $K\rightarrow\infty$,
\begin{equation}
\lim_{K\rightarrow\infty} \frac{1}{K}\sum_{k=1}^K \phi(\theta_k) - \EE_\pi[\phi(\theta)].
\end{equation}
Assuming for the moment that the dynamics governed by Equation \eqref{eq:SGLD} is ergodic, with invariant measure $\pi_h(\theta;n)$, the above asymptotic bias simply becomes $\EE_{\pi_h(\cdot;n)}[\phi(\theta)]-\EE_\pi[\phi(\theta)]$.
In the case of Euler-Maruyama, where $n=N$ and the gradient is computed exactly, the asymptotic behaviour of the dynamics is well understood, in particular its asymptotic bias is $\mathcal{O}(h)$ \cite{2010MattinglyPoisson}.
When $n<N$, using the recent generalisations \cite{AVZ14,SN14} of the approach by \cite{TaT90} reviewed in Section \ref{sec:3}, we are able to derive an expansion of the asymptotic bias in powers of the step size $h$.  This allows us to explicitly identify the effect, on the leading order term in the asymptotic bias, of replacing the true gradient  \eqref{eq:model_density} with the unbiased estimator  \eqref{eq:unbiased}.  In particular, we show in Section \ref{sec:Weak-Convergence-Analysis} that, relative to Euler-Maruyama, the leading term contains an additional factor related to the covariance of the subsampled gradient estimators \eqref{eq:unbiased}.

Based on this result, in Section \ref{sec:anal_mSGLD}, we propose a modification of the SGLD (referred to simply as mSGLD) which has the  same asymptotic bias as the Euler-Maruyama method up to first order in $h$.
The mSGLD is given by 
\begin{equation} \label{eq:mSGLD_intro}
\theta_{k+1}=\theta_{k}+\frac{h}{2}\gradest(\theta_k)+\sqrt{h}\left(I-\frac{h}{2}\text{Cov}\left[\gradest(\theta_k)\right]\right)\xi_{j}
\end{equation}
where $\text{Cov}\left[\gradest(\theta_k)\right]$ is the covariance of the gradient estimator.  When the covariance is unknown, it can in turn be estimated by subsampling as well.
This modification is different from the Stochastic Gradient Fisher Scoring \cite{AhnKorWel2012}, a modification of the injected noise in order to better match the Bernstein von Mises posterior. In contrast, the mSGLD is a local modification based on the estimated variance of the stochastic gradient.

The second contribution provides a complementary finite time analysis.  In the finite time case both the bias and the variance of the estimator are non-negligible, and our analysis accounts for both by focussing on bounding the mean squared error (MSE) of the estimator \eqref{eq:testestimator}.  Our results, presented in Section \ref{sec:5}, show that,
\begin{equation}\label{eq:introFiniteTimeBound}
 \E\left[ \left(\frac{1}{K}\sum_{k=0}^{K-1}\phi(\theta_{k})-\pi(\phi)\right)^2\right]\leq C(n)\left(h^2+\frac{1}{K h}\right),
\end{equation}
where the RHS only depends on $n$ through the constant $C(n)$, the $h^2$ term is a contribution of the (square of the) bias while the $1/Kh$ term is a contribution of the variance.  We see that there is a bias-variance trade-off, with bias increasing and variance decreasing monotonically with $h$.  Intuitively, with larger $h$ the Markov chain can converge faster (lower variance) with the same number of steps, but this incurs higher discretization error.   Our result is achieved by extending the work of \cite{2010MattinglyPoisson} from $\IT^{d}$ to $\IR^{d}$. The main difficulty in achieving this relates to combining the results of  \cite{Veretennikov2001Poisson}  and  \cite{TehThierryVollmerSGLD2014}, in order to establish the existence of nice, well controlled solutions to the corresponding Poisson equation \cite{2010MattinglyPoisson}.   We can minimise Equation \eqref{eq:introFiniteTimeBound} over $h$, finding that the minimizing $h$ is on the order of $K^{-\frac{1}{3}}$, and yields an MSE of order $K^{-\frac{2}{3}}$. This agrees, surprisingly, with the scaling of $K^{-\frac{1}{3}}$ for the central limit theorem established for the case of decreasing step sizes, for the Euler-Maruyama scheme in  \cite{lamberton2002recursive} and for SGLD in \cite{TehThierryVollmerSGLD2014}. This unexpected result, that the decreasing step size and fixed step size discretisations have, up to a constant, the same efficiency seems to be missing from the literature.

Our theoretical findings are confirmed by numerical simulations. More precisely, we start by studying a one dimensional Gaussian toy model both in terms of the asymptotic bias and the MSE of time averages in Section \ref{sec:GaussianModel}. The simplicity of this model allows us to obtain explicit expressions for these quantities and thus illustrate in a clear way the connection with the theory. More precisely, we confirm that the scaling of the step size and the number of steps for a prescribed MSE obtained from the upper bound in Equation \eqref{eq:introFiniteTimeBound} matches the scaling derived from the analytic expression for the MSE for this toy model.
More importantly, this simplicity allows us to make significant analytic progress in the study of the asymptotic bias and MSE of time averages in the limit of large data sets $N \rightarrow \infty$. In particular, we are able to show that the SGLD reduces the computational complexity by one order of magnitude in $N$. for the estimation of the second moment in comparison with the Euler method if the error is quantified through the MSE. 



In summary, this paper is organised as follows.  We  present our first explorations of the SGLD applied to a one-dimensional Gaussian toy model in Section \ref{sec:GaussianModel}. For this model we obtain an analytic characterisation of its bias and variance. This  serves as intuition and benchmark for the two analyses developed in Sections \ref{sec:3} to \ref{sec:5}. 
In Section \ref{sec:3} we review some known results  about the effect of the numerical discretisation of Equation \eqref{eq:Langevin} in terms of the finite time weak error as well as in terms of the invariant measure approximation. In Section \ref{sec:Weak-Convergence-Analysis} we apply these results to analyse the finite and long time properties of the SGLD, as well as to construct the modified SGLD algorithm which, asymptotically in $h$, behaves exactly as the Euler-Maruyama method ($n=N$) while still sub-sampling the data set at each step. Furthermore, in Section \ref{sec:5} we discuss the properties of  the finite time sample averages, include its MSE. In Section \ref{sec:anal_toy} we revisit the Gaussian toy model to obtain a more precise understanding of the behaviour of  SGLD in a large data and high accuracy regime. This is achieved using analytic expressions of the expectations of the sample average which are  obtained using the supplemented Mathematica\textsuperscript{\textregistered} notebook described in detail in Appendix \ref{app:Toy}. Finally, in Section \ref{sec:6} we demonstrate the observed performance of SGLD for a Bayesian logistic regression model which matches the theory, while, we conclude this paper in Section \ref{sec:7} with a discussion on some possible extensions of this work.

\section{Exploring a one-dimensional Gaussian Toy Model}\label{sec:GaussianModel}
In this section, we develop results for a simple toy model, which will serve as a benchmark for the theory developed in Sections \ref{sec:3} to \ref{sec:5}. In particular, we obtain analytic expressions for the bias and the variance of the sample average of the SGLD, allowing us to characterise its performance in detail. 

We consider a one-dimensional linear Gaussian model, 
\begin{align}
\label{eq:toymodel}
\begin{aligned}
\theta & \sim  \mathcal{N}(0,\sigma_{\theta}^{2}),\\
X_{i} \,|\, \theta & \stackrel{i.i.d.}{\sim} \mathcal{N}(\theta,\sigma_{x}^{2}) &&\text{for $i=1,\ldots,N$.}
\end{aligned}
\end{align}
The posterior is given by 
\begin{equation}
\pi=\mathcal{N}(\mu_{p},\sigma_{p}^{2})=\mathcal{N}\left(\frac{\sum_{i=1}^{N}X_{i}}{\frac{\sigma_{x}^{2}}{\sigma_{\theta}^{2}}+N},\left(\frac{1}{\sigma_{\theta}^{2}}+\frac{N}{\sigma_{x}^{2}}\right)^{-1}\right).\label{eq:SGaussianPosterior}
\end{equation}
For this choice of $\pi$, the Langevin diffusion \eqref{eq:Langevin} becomes,
\begin{equation}
d\theta(t)=-\frac{1}{2}\left(\frac{\theta(t)-\mu_{p}}{\sigma_{p}^{2}}\right)dt+dW_{t},\label{eq:Lan_OU}
\end{equation}
and its numerical discretisation by the SGLD with step size $h$ reads as follows, 
\begin{equation}
\theta_{k+1}=(1-Ah)\theta_{k}+B_{k}h+\sqrt{h}\xi_{k},\label{eq:sgld_OU}
\end{equation}
where $\xi_{k}\stackrel{i.i.d.}{\sim}\mathcal{N}(0,1)$ and 
\begin{eqnarray}
A & = & \frac{1}{2}\left(\frac{1}{\sigma_{\theta}^{2}}+\frac{N}{\sigma_{x}^{2}}\right),\nonumber \\
B_{k} & = & \frac{N}{n}\frac{\sum_{i=1}^{n}X_{\tau_{ki}}}{2\sigma_{x}^{2}},\label{eq:toyBdistribution}
\end{eqnarray}
where $\tau_k=(\tau_{k1},\cdots,\tau_{kn})$ denote a random subset of $[N]=\{1,\cdots,N\}$ 
generated, for example, by sampling with or without replacement from
$[N]$, independently for each $k$.  We note that the updates \eqref{eq:sgld_OU} will be stable only if $0\le 1-Ah<1$, that is, $0<h<1/A$\footnote{Note that the posterior variance is $\asymp 1/A$, so that steps of size $\asymp 1/A$ are $1/\sqrt{A}$ smaller than the width of the posterior.  However the injected noise has variance $\asymp 1/A$ which matches the posterior variance.}  In the following we will also consider parameterising the step size as $h=r/A$ where $0<r<1$.  

We denote $B=(B_k)_{k\ge 0}$.  At the risk of obfuscating the notation, we will denote by $\Var(B)$ the common variance of $B_k$ for all $k$.
For sampling with replacement, we have
\[
\operatorname{Var}(B)=\frac{1}{4\sigma_{x}^{4}}\frac{N}{n}\sum_{j=1}^{N}\left(X_{i}-\frac{1}{N}\sum_{i=1}^{N}X_{i}\right)^{2}
= \frac{1}{4\sigma_{x}^{4}}\frac{N(N-1)}{n}\text{Var}(X),
\]
where $\text{Var}(X)$ is the typical unbiased empirical estimate of the variance of $\{X_1,\ldots,X_N\}$.  For sampling without replacement we have,
\begin{align}
\operatorname{Var}(B)&=\frac{1}{4\sigma_{x}^{4}}\frac{N(N-n)}{n(N-1)}\sum_{i=1}^{N}\left(X_{i}-\frac{1}{N}\sum_{i=1}^{N}X_{i}\right)^{2}
=\frac{1}{4\sigma_{x}^{4}}\frac{N(N-n)}{n}\text{Var}(X). \label{eq:VarBToy}
\end{align}

\subsection{Analysis of the Asymptotic Bias\label{sub:toyBias}}

We  start by inspecting the estimate of the posterior mean. In particular,
using Equation \eqref{eq:sgld_OU} and taking expectations with respect to
$\xi_{k}$, we have 
\begin{equation}
\IE(\theta_{k+1}|B)=(1-Ah)\IE(\theta_{k}|B)+B_{k}h\label{eq:GaussianRecurrence}
\end{equation}
which can be solved in order to obtain 
\[
\IE(\theta_{M}|B)=(1-Ah)^{M}\IE(\theta_{0})+\sum_{k=0}^{M-1}h(1-Ah)^{k}B_{M-k-1}.
\]
If we now take the expectation with respect to the random subsets $B_{k}$, using the fact
that $\IE(B_{k})=\IE(B)$ and take the limit of $M\rightarrow\infty$,
we have 
\[
\IE(\theta_{\infty})=\sum_{k=0}^{\infty}(1-Ah)^{k}h\IE(B)
=h/(1-(1-Ah))\IE(B)=\frac{\IE(B)}{A}=\frac{\sum_{i=1}^{N}X_{i}}{\frac{\sigma_{x}^{2}}{\sigma_{\theta}^{2}}+N}.
\]
We  thus see that the SGLD is capturing the correct limiting mean
of the posterior independently of the choice of the step size $h$.  In other words, for the test function $\phi(\theta)=\theta$, the asymptotic bias is nil.

We  now investigate the behaviour of the limiting variance under
the SGLD. Starting with the law of total variance,
\[
\operatorname{Var}[\theta_{k+1}]=\IE(\operatorname{Var}[\theta_{k+1}\mid B])+\operatorname{Var}(\IE[\theta_{k+1}\mid B]),
\]
a simple calculation now shows that 
\[
\operatorname{Var}[\theta_{k+1}\mid B]=(1-Ah)^{2}\operatorname{Var}[\theta_{k}\mid B]+h
\]
and 
\[
\operatorname{Var}(\IE[\theta_{k+1}\mid B])=(1-Ah)^{2}\operatorname{Var}(\IE[\theta_{k}\mid B])+h^{2}\operatorname{Var}(B_k).
\]
Combining these two results, we see that 
\[
\operatorname{Var}(\theta_{k+1})=(1-Ah)^{2}\operatorname{Var}(\theta_{k})+h+h^{2}\operatorname{Var}(B_k).
\]
If we now take the limit of $k\rightarrow\infty$, we have that 
\begin{equation}\label{eq:asympVarSGLD}
\operatorname{Var}(\theta_{\infty})=\frac{1}{2A-A^{2}h}+\frac{h\operatorname{Var}(B)}{2A-A^{2}h}.
\end{equation}
where $\operatorname{Var}(B)$ is the common value of $\Var(B_k)$ for all $k\ge 0$.  
We note here that in the case of the Euler-Maruyama method (from here on we will simply refer to this as the Euler method) where $n=N$ and $\text{Var}(B)=0$, only the first term remains.  In other words, the first term is an (over-)estimate of the posterior variance $\sigma_p^2 = 1/2A$ obtained by the Euler-Maruyama discretisation at step size $h$. Our result here coincides with \cite{KZ11}. On the other hand, the second term is an additional bias term due to the variability of the stochastic gradients.  Further, using a Taylor expansion in $h$ of the second summand, we see that the SGLD has an excess bias, relative to the Euler method, with first order term equal to 
\begin{equation}
h \frac{\operatorname{Var}(B)}{2A}. \label{eq:toymodelbias}
\end{equation}
Using the fact that $\text{Var}(\theta_\infty) = \IE[\theta_\infty^2]-\IE[\theta_\infty]^2$, and that the asymptotic bias of estimating $\IE[\theta]$ is nil in this simple model, we see that the above gives the asymptotic biases of the Euler method and SGLD in the case of the test function $\phi(\theta) = \theta^2$.

We now consider the modified SGLD given in Equation \eqref{eq:mSGLD_intro} and to be discussed in Section \ref{sec:anal_mSGLD}. 
In this case the numerical discretisation of Equation \eqref{eq:Lan_OU}
becomes 
\begin{equation}
\theta_{k+1}=\theta_{k}-Ah\theta_{k}+B_{k}h+\sqrt{h}\left(1-\frac{h}{2}\operatorname{Var}(B)\right)\xi_{k}\label{eq:msgld_OU}
\end{equation}
A similar calculation as for the SGLD shows that 
\begin{align}
\IE(\theta_{\infty})&=\frac{\sum_{i=1}^{N}X_{i}}{\frac{\sigma_{x}^{2}}{\sigma_{\theta}^{2}}+N} \nonumber \\
\operatorname{Var}(\theta_{\infty})&=\frac{1}{2A-A^{2}h}+\frac{h^{2}\operatorname{Var}^{2}(B)}{4(2A-A^{2}h)}.
\label{eq:asympVarmSGLD}
\end{align}
with the last term being the excess asymptotic bias.  A Taylor expansion of the excess bias term shows that the term of order $h$ vanishes and the leading term has order $h^2$.  Hence, for small $h$, the excess bias is negligible compared to the asymptotic bias of the Euler method, and we can say that, up to first order and in this simple example, the mSGLD has the same asymptotic bias as for the Euler method.  
In Section \ref{sec:anal_mSGLD}, we will show that these results hold more generally.

\begin{figure}[htb]
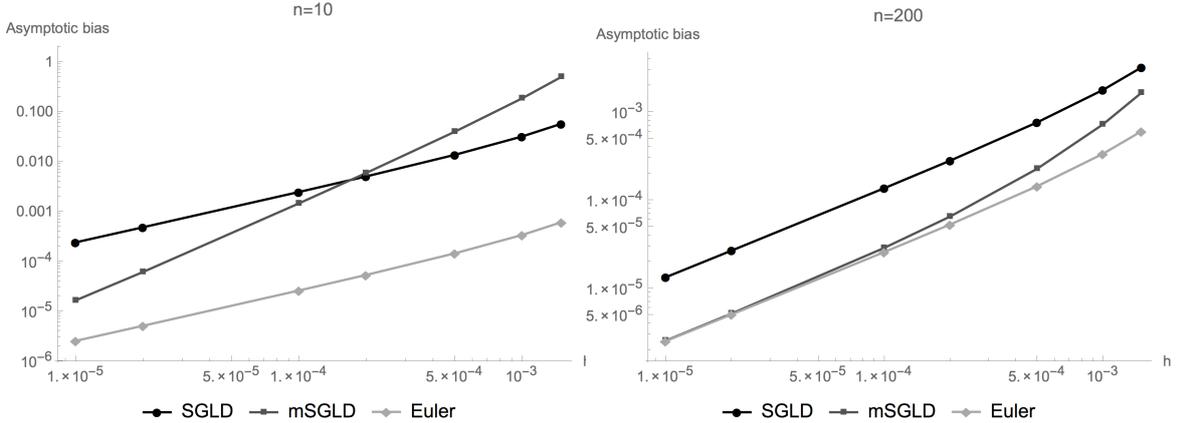
\label{fig:bias}
\centering 
\includegraphics[scale=0.55]{error_n=10.png} \hskip-2mm \includegraphics[scale=0.55]{error_n=200.png}
\caption{Comparison of the asymptotic biases for the SGLD, the mSGLD and the
Euler method for the test function $\phi(\theta)=\theta^2$. For all simulations, we have used $N=10^{3}$. We used $n=10$ on the LHS and 
 $n=200$ on the RHS.}
\label{fig:comparison} 
\end{figure}

It is useful to visualise the above analytic results for the asymptotic biases of  the Euler method, SGLD and mSGLD. In Figure \ref{fig:comparison} we show this for a dataset of $1000$ points drawn according to the model. 
The first observation is that the Euler method has lowest asymptotic bias among all
three methods (although of course it is also the most computationally expensive; see Section \ref{sec:anal_toy}).
We observe that if we choose $n=10$ points for each gradient evaluation, 
for large values of the step size $h$, the SGLD is superior
to the mSGLD. However, as $h$ is reduced, this is no longer the case.
Furthermore, if we use a more accurate gradient estimation with $n=200$ data points, we see that  mSGLD outperforms
 SGLD for all the step sizes used, but more importantly its asymptotic bias
is now directly comparable with the Euler method where
all the data points are used for evaluating the gradient.

\subsection{Finite Time Analysis}\label{sec:toyMSE}

In the previous subsection we analysed the behaviours of the three algorithms in terms of their biases in the asymptotic regime.  In practice, we can only run our algorithms for a finite number of steps, say $K$, and it would be interesting to understand the behaviours of the algorithms in this scenario.  With a finite number of samples, in addition to bias we also have to account for variance due to the Monte Carlo estimation.  

A sensible analysis accounting for both bias and variance is to study the behaviour of the mean squared error (MSE), say in the Monte Carlo estimation for the second moment,
\begin{equation}
\text{MSE}_2:=\IE\left(\frac{1}{K}\sum_{k=0}^{K-1}\theta_{k}^{2}-(\mu_{p}^{2}+\sigma_{p}^{2})\right)^{2}.\label{eq:toyMSEIntro}
\end{equation}
We can expand the quadratic, and express $\text{MSE}_2$ as a linear combination of terms of the form $\IE[\theta_{j}^{p}]$ for $p=1,2,3,4$.  Each of terms can be calculated analytically, depending on the data set $X$, the total number of steps $K$, the subset size $n$, as well as the scaled step size parameter $r=hA$. We provide these calculations in Appendix \ref{sec:appToyFixedModels} and a Mathematica\textsuperscript{\textregistered} file in the supplementary materials.

In Figure~\ref{fig:MSEfixedN} we visualise the behaviour of the resulting $\text{MSE}_2$ for a fixed  dataset with $N=1000$ items, and with scaled step size $r=1/20$.  For the same number of steps $M$, the left figure shows that SGLD and mSGLD behaves similarly, decreasing initially then asymptoting at their asymptotic biases studied in the previous subsection.  At $r=1/20$ mSGLD has lower asymptotic biases than SGLD.  Further, both $\text{MSE}_2$'s decrease with increasing subset size $n$, and are higher than that for the Euler method  at
 $n=1000$.  Since SGLD and mSGLD computational costs per step are linear in $n$, the right figure instead plots the same $\text{MSE}_2$'s against the (effective) number of passes through the dataset, that is, number of steps times $n/N$.  This quantity is now proportional to the computational budget.  Now we see that smaller subset sizes produce initial gains, but asymptote at higher biases.

\begin{figure}
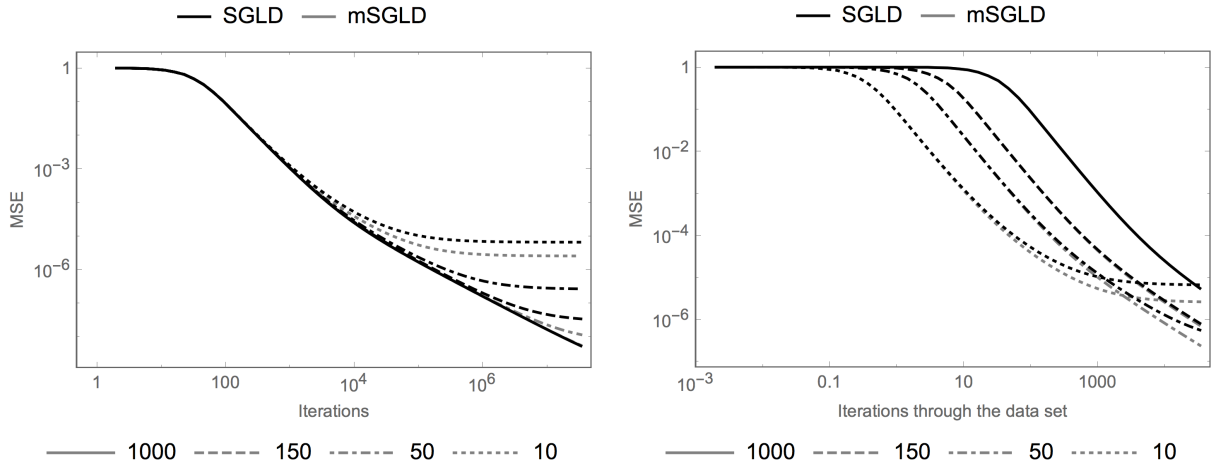

\includegraphics[width=0.5\textwidth]{FixedNSteps}\hspace{0.02\textwidth}
\includegraphics[width=0.5\textwidth]{FixedNSLevals}

\caption{ $\text{MSE}_2$ of the sample average for the SGLD and
the mSGLD for the second moment of the posterior.}\label{fig:MSEfixedN}
\end{figure}





\marginpar{Kostas: scrap this paragraph}
These analytical results for a simple Gaussian model demonstrate the more general theory which forms the core contributions of this paper.  Sections \ref{sec:3} and \ref{sec:4} develop a method to study the asymptotic bias as a Taylor expansion in $h$, while Section \ref{sec:5} provides a finite time analysis in terms of the mean squared error.  Both analyses are based on the behaviour of the algorithms for small step sizes, and in this regime we see that mSGLD has better performance than SGLD.  In Section \ref{sec:anal_toy} we will return to the simple Gaussian model to study the behaviour of the algorithms using different measures of performance and in different regimes.  In particular, we will see that for larger step sizes SGLD has better performance than mSGLD.

\section{Review of Weak Order Results}\label{sec:3}

In this section we review some existing results regarding the ergodicity and accuracy of numerical approximations of SDEs.
We start in Section \ref{sec:21} by introducing the framework and notation, the Fokker-Planck and backward Kolmogorov equations, 
and with some preliminary results on local weak errors of numerical one-step integrators.  Section \ref{sec:22} presents assumptions necessary for ergodicity, and extends the results to a global error expansion of the weak error as well as the error in the approximation of the invariant measure. Finally, in Section \ref{sec:33a} we apply our results to explicitly calculate the leading order error term of the numerical approximation of an Ornstein-Uhlenbeck solved by the Euler method.

\subsection{One-step Numerical Approximations of Langevin Diffusions}\label{sec:21} 

Let us denote by $\rho(y,t)$ the probability
density of $\theta(t)$ defined by the Langevin diffusion \eqref{eq:Langevin} with initial
condition $\theta(0)=\theta$ and target density $\pi(y)$. Then $\rho(y,t)$ is the solution of the Fokker-Planck equation,
\begin{align}\label{eq:Fokker} 
\frac{\partial\rho}{\partial t}  &=  \mathcal{L}^{*}\rho,
\end{align}
with initial condition $\rho(y,0)=\delta(y-\theta)$, a Dirac mass for the deterministic initial condition, and the operator $\mathcal{L}^{*}$ given by 
\begin{equation}
\mathcal{L}^{*}\rho=-\frac{1}{2}\nabla_{\theta}\cdot(\nabla\log{\pi(\theta)}\rho)+
\frac{1}{2}\nabla_{\theta}\cdot\nabla_{\theta}\cdot\rho.\label{eq:adjoint}
\end{equation}
This operator is the $L^{2}$-adjoint of the generator of the Markov process $(\theta(t))_{t\ge 0}$ given by \eqref{eq:Langevin},
\begin{equation}
\mathcal{L}=\frac{1}{2}\nabla_{\theta}\log{\pi(\theta)}\cdot\nabla_{\theta}+\frac{1}{2}\Delta_{\theta},\label{eq:generator}
\end{equation}
Given a test function $\phi$, define $u(\theta,t)$ to be the expectation,
\begin{equation}
u(\theta,t)=\mathbb{E}\left(\phi(\theta(t))|\theta(0)=\theta\right),\label{eq:exact_def}
\end{equation}
with respect to the diffusion at time $t$ when started with initial condition $\theta(0)=\theta$.
We note
that $u(\theta,t)$ is the solution of the backward Kolmogorov equation
\begin{align} 
\frac{\partial u}{\partial t} & =  \mathcal{L}u,\label{eq:Kolmogorov} \\
u(\theta,0) & =  \phi(\theta). \nonumber 
\end{align}

A formal Taylor series expansion for $u$  in terms of the
generator $\mathcal{L}$ was derived
in \cite{KZ11} and made rigorous by \cite{DeF12} for the case where the state space is 
$\theta\in\IT^{d}$. The Taylor series is of the following form,
\begin{equation}
u(\theta,h)=\phi(\theta) + \sum_{j=1}^{l}\frac{h^{j}}{j!}\mathcal{L}^{j}\phi(\theta)+h^{l+1}r_{l}(\theta),\label{e:expansion_rigorous}
\end{equation}
for all positive integers $l$,  with the remainder satisfying a bound of the form $|r_{l}(\theta)|\leq c_{l}(1+|\theta|^{\kappa_{l}})$ for some constants $c_{l},\kappa_{l}$  depending on $\pi$ and $\phi$. 

\begin{remark} \label{th:Kolmogorov_assumptions}  Another way to turn
$u(\theta,h)=\phi(\theta)+h\mathcal{L}\phi+\frac{h^{2}}{2}\mathcal{L}^{2}\phi+\cdots$
into a rigorous expansion, see Equation \eqref{e:expansion_rigorous},  is to follow the approach in \cite[Lemma 2]{TaT90} and
to assume that $\log\pi$ is $C^{\infty}$ with bounded derivatives of any
order (and this is the approach we follow here). This fact, together with the assumption that 
\begin{equation}
|\phi(\theta)|\leq C(1+|\theta|^{s})\label{eq:phi_assumption}
\end{equation}
for some positive integer $s$ is enough to prove that the solution
$u$ of Equation \eqref{eq:Kolmogorov} has derivatives of any order
that have a polynomial growth of the form of Equation \eqref{eq:phi_assumption},
with other constants $C,s$ that are independent of $t\in[0,T]$. 
In turn, these regularity bounds establish that  Equation \eqref{e:expansion_rigorous} holds. We mention here that the regularity conditions where relaxed in recent work in \cite{Kopec14a} for the elliptic case and in \cite{Kopec14b} for the hypoelliptic case.
\end{remark} 

Now assume that one solves Equation \eqref{eq:Langevin} numerically with a one step  integrator, which we shall denote by,
\begin{equation} \label{eq:num_general}
\theta_{n+1}=\Psi(\theta_{n},h,\xi_{n}),
\end{equation}
where $\theta_0=\theta(0)$, $h$ denotes the step size, $\xi_n$ are iid $\mathcal{N}(0,1)$, and $\theta_{n}$ denotes the numerical approximation of $\theta(nh)$ for each $n\in\IN$. For example in the case of the Euler method for equation \eqref{eq:Langevin} one has 
\[
\Psi(\theta,h,\xi)=\theta+\frac{h}{2}\nabla \log{\pi(\theta)}+\sqrt{h}\xi
\]
Now, using this formulation we can define
\begin{equation}
U(\theta,h)=\mathbb{E}(\phi(\theta_{1})|\theta_{0}=\theta),\label{eq:numin_def}
\end{equation}
for the expectation of the test function after one step of the numerical integrator starting with the initial condition $\theta_0=\theta$.  We will make the following (easily satisfied) regularity and consistency assumptions about the integrator:

\begin{assumption}\label{ass1} 
We assume that the following hold:
\begin{itemize}
\item
 $\nabla \log\pi$ is $C^{\infty}$ with bounded derivatives of all orders. 
\item
For all deterministic initial conditions $\theta_{0}$,  we have
\begin{equation}
|\IE(\theta_{1}-\theta_{0})|\leq C(1+|\theta_{0}|)h,\text{ and}\qquad|\theta_{1}-\theta_{0}|\leq M(1+|\theta_{0}|)\sqrt{h},\label{ass:Milstein}
\end{equation}
where $C$ is a constant independent of $h$, for $h$ small enough  and  $M$  is a random variable that has bounded moments of all orders independent of $h$ and $\theta_{0}$. 
\item
Equation \eqref{eq:numin_def}
has a weak Taylor series expansion of the form
\begin{equation}
U(\theta,h)=\phi(\theta)+hA_{0}(\pi)\phi(\theta)+h^{2}A_{1}(\pi)\phi(\theta)+\cdots,\label{eq:numin_taylor_expansion_formal}
\end{equation}
where $A_{i}(\pi),~i=0,1,2,\ldots$ are linear differential operators
with coefficients depending smoothly on the drift function $\nabla\log{\pi(\theta)}$
and its derivatives  (depending on the choice of the integrator).
\item
$A_{0}(\pi)$ coincides with the generator $\mathcal{L}$, in other words, the numerical method has weak order at least one.
\end{itemize}
\end{assumption}

Assumptions \ref{ass1} immediately imply the existence of a rigorous expansion 
\begin{equation}
U(\theta,h)=\phi(\theta)+\sum_{i=0}^{l}h^{i+1}A_{i}(\pi)\phi(\theta)+h^{l+2}R_{l}(\theta)\label{eq:numin_taylor_expansion_rigorous}
\end{equation}
for all positive integers $l$, with a remainder satisfying $|R_{l}(\theta)|\leq C_{l}(1+|\theta|^{K_{l}})$ for some constants $C_l, K_l$. We say that the numerical solution has local weak order $p$ if the first $p$ terms in the expansion \eqref{eq:numin_taylor_expansion_formal} of the numerical approximation agrees with that \eqref{e:expansion_rigorous} for the exact diffusion.  In this case, it is easy to see that the following local error formula holds,
\begin{equation}  
\IE(\phi(\theta(h))|\theta(0)=\theta)-\IE(\phi(\theta_{1})|\theta_0=\theta)=h^{p+1}\left(\frac{\mathcal{L}^{p+1}}{(p+1)!}-A_{p}\right)\phi(\theta)+{\cal O}(h^{p+2}).\label{emu:error_cst}
\end{equation}


\subsection{Global Weak Error Expansion}\label{sec:22}

In this subsection, we will extend the local weak error expansion to a global one.  Specifically, after $M$ steps of the numerical integrator with step size $h$, we are interested in the difference between $\theta_M$ and the exact diffusion $\theta(T)$ where $T=Mh$, as evaluated by the difference between the corresponding expectations of $\phi$,

\begin{equation}
E(\phi,h,T)=\IE(\phi(\theta(T))|\theta(0)=\theta)-\IE(\phi(\theta_{M})|\theta_0=\theta),\label{e:weak}
\end{equation}

In order for this study to make sense (when considering the limit $T \rightarrow \infty$), we will require that the SDE and its numerical approximation are both ergodic.  We make standard assumptions in order for the Langevin diffusion $(\theta(t))_{t\ge 0}$ as given by \eqref{eq:Langevin} to be ergodic (see \cite{Has80}):
\begin{assumption} \label{th:ergodic1}
We assume that the following hold for the Langevin diffusion $(\theta(t))_{t\ge 0}$:
\begin{itemize}
\item $\nabla\log{\pi}$ is $C^{\infty}$ with bounded derivatives of all orders.
\item there exists $\beta>0$ and a compact set $K \subset \R^{d}$ such that
$\forall\ \theta\in\IR^{d}\backslash K$,
\[
\left\langle \theta,\nabla\log{\pi(\theta)}\right\rangle \leq-\beta\|\theta\|_2^{2}.
\]
\end{itemize}
\end{assumption}
The question of the ergodicity of the numerical approximation $(\theta_n)$ is considerably more intricate in general.  There exist cases where the underlying Langevin diffusion is ergodic, but its numerical approximation is not ergodic, or does not converge exponentially fast \cite{RT96}. This relates mainly to the properties of the drift coefficient and its behaviour at infinity. For the Euler-Maruyama and the Milstein scheme this has been investigated in \cite{TaT90}.
In what follows we will simply
assume that the Markov chain $(\theta_n)$ defined by the numerical approximation  is indeed
ergodic. 
Under this assumption the following theorem, which combines results derived by \cite{TaT90} and \cite{Mil86}, can be shown (see \cite{AVZ14} for a proof):
\begin{theorem} \label{thm:talay} 
Suppose that the state space is $\R^d$, that Assumptions \ref{ass1} and \ref{th:ergodic1} hold, and that the Markov chain  $(\theta_n)_{n\ge 0}$ defined by the one step integrator \eqref{eq:num_general} is ergodic. 
If the numerical approximation has local weak order $p$, that is, Equation
\eqref{emu:error_cst}  holds, 
then we have the following expansion of the global error \eqref{e:weak},
for all $\phi\in C_{P}^{2p+4}(\R^{d},\R)$, 
\begin{equation}
E(\phi,h,T)=h^{p}\int_{0}^{T}\IE(e(\theta({s}),s))ds+\mathcal{O}(h^{p+1}), \label{eq:main_result1}
\end{equation}
where $e(\theta,t)$ is given by 
\begin{equation}
e(\theta,t)=\left(\frac{1}{(p+1)!}\mathcal{L}^{p+1}-{A}_{p}\right)v(\theta,t),\label{eq:error_coefficient}
\end{equation}
with $v(\theta,t)=\IE(\phi(\theta({T}))|\theta({t})=\theta)$ satisfying
\begin{align}
 \label{eq:Kolomogorov1} 
\frac{\partial v}{\partial t} & =  -\mathcal{L}v,\nonumber \\
v(\theta,T) & =  \phi(\theta).
\end{align}
\end{theorem}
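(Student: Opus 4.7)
The plan is to follow the classical Talay--Tubaro strategy, writing the global error as a telescoping sum of local errors weighted against the solution $v(\theta,t)$ of the backward Kolmogorov equation \eqref{eq:Kolomogorov1}. Set $T=Mh$, let $\theta_0=\theta(0)$, and observe that $v(\theta,T)=\phi(\theta)$ while $v(\theta,0)=\mathbb{E}(\phi(\theta(T))\mid\theta(0)=\theta)$. Consequently
\begin{equation*}
E(\phi,h,T)=\mathbb{E}\bigl[v(\theta_0,0)\bigr]-\mathbb{E}\bigl[v(\theta_M,T)\bigr]
=-\sum_{k=0}^{M-1}\mathbb{E}\bigl[v(\theta_{k+1},(k+1)h)-v(\theta_k,kh)\bigr].
\end{equation*}
This turns the problem into one of controlling a single generic increment, which is the standard telescoping trick; everything after this step is local analysis plus a Riemann-sum passage to the limit.

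Next, I would expand each summand in $h$. On the one hand, the backward Kolmogorov relation $\partial_t v=-\mathcal{L}v$ combined with the exact-diffusion Taylor expansion \eqref{e:expansion_rigorous} applied to the function $y\mapsto v(y,(k+1)h)$ gives
\begin{equation*}
v(\theta,kh)=v(\theta,(k+1)h)+\sum_{j=1}^{p+1}\frac{h^{j}}{j!}\mathcal{L}^{j}v(\theta,(k+1)h)+\mathcal{O}(h^{p+2}),
\end{equation*}
while Assumption~\ref{ass1} together with \eqref{eq:numin_taylor_expansion_rigorous} applied to $v(\cdot,(k+1)h)$ yields
\begin{equation*}
\mathbb{E}\bigl[v(\theta_{k+1},(k+1)h)\mid\theta_k=\theta\bigr]=v(\theta,(k+1)h)+\sum_{i=0}^{p}h^{i+1}A_{i}(\pi)\,v(\theta,(k+1)h)+\mathcal{O}(h^{p+2}).
\end{equation*}
Since the scheme has local weak order $p$, the operators $A_i$ and $\mathcal{L}^{i+1}/(i+1)!$ agree for $i=0,\dots,p-1$, so subtracting the two displays shows that the conditional expectation of each telescoping increment equals
\begin{equation*}
h^{p+1}\Bigl(A_{p}-\tfrac{1}{(p+1)!}\mathcal{L}^{p+1}\Bigr)v(\theta_k,(k+1)h)+\mathcal{O}(h^{p+2})
= -h^{p+1} e(\theta_k,(k+1)h)+\mathcal{O}(h^{p+2}),
\end{equation*}
with $e$ as defined in \eqref{eq:error_coefficient}. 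Summing over $k$ produces $E(\phi,h,T)=h^{p+1}\sum_{k=0}^{M-1}\mathbb{E}\,e(\theta_k,(k+1)h)+\mathcal{O}(Mh^{p+2})$, and since $Mh=T$ the remainder is already $\mathcal{O}(h^{p+1})$.

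The final step is to replace the discrete sum along the numerical trajectory by the integral along the exact diffusion. Noting that the numerical scheme has weak order at least one by Assumption~\ref{ass1}, one has $\mathbb{E}\,e(\theta_k,(k+1)h)=\mathbb{E}\,e(\theta(kh),kh)+\mathcal{O}(h)$ uniformly in $k\le M$, so the Riemann sum $h\sum_{k=0}^{M-1}\mathbb{E}\,e(\theta(kh),kh)$ converges to $\int_0^T\mathbb{E}\,e(\theta(s),s)\,ds$ with error $\mathcal{O}(h)$; multiplying by $h^p$ yields \eqref{eq:main_result1}.

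The principal obstacle is the uniform-in-$t$ regularity of $v(\cdot,t)$ needed to justify both Taylor expansions with remainders that do not blow up as $T\to\infty$: one needs the derivatives of $v$ up to order $2p+4$ to grow at most polynomially in $\theta$ with constants independent of $t\in[0,T]$. Under Assumption~\ref{ass1} this follows as in Remark~\ref{th:Kolmogorov_assumptions} from \cite[Lemma~2]{TaT90}, but it is the linchpin of the argument; the secondary technical point is uniform moment control $\sup_{k\le M}\mathbb{E}|\theta_k|^{q}<\infty$, supplied by Assumption~\ref{ass1} via a standard induction on $k$, which is what licenses absorbing every $(1+|\theta_k|^{\kappa})$ remainder into the $\mathcal{O}(\cdot)$ symbols above.
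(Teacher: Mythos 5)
Your proposal is correct and follows essentially the same route as the proof the paper defers to in \cite{AVZ14}: the Talay--Tubaro telescoping decomposition against the backward Kolmogorov solution $v$, cancellation of the first $p$ terms of the two local expansions via the local-order-$p$ hypothesis, and a Riemann-sum passage using the weak order one of the scheme, with the key technical inputs (uniform-in-$t$ polynomial bounds on derivatives of $v$ and uniform moment bounds on $\theta_k$) identified exactly as in Remark \ref{th:Kolmogorov_assumptions} and Assumption \ref{ass1}. The use of $\partial_t v=-\mathcal{L}v$ to express the error coefficient purely through $\mathcal{L}$ and $A_p$ is precisely the formulation the paper highlights as its departure from \cite{TaT90,Mil86}.
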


The expression \eqref{eq:main_result1} was proved
by \cite{TaT90} for specific methods (e.g.\ the Euler-Maruyama or
the Milstein methods), while the general procedure to infer the global
weak order from the local weak order is due to \cite{Mil86}
(see also \cite[Chapter 2.2]{MiT04}). However, the  formulation
of the error function \eqref{eq:error_coefficient} here is in terms of the generator $\mathcal{L}$ and the
operators $A_{i}$ in Assumption \ref{ass1}, and does not contain any time derivatives as in \cite{TaT90,Mil86}. 
This formulation will be particularly useful for obtaining our main results.


Using Theorem \ref{thm:talay}, one can obtain a similar
expansion to that in Equation \eqref{eq:main_result1} for the difference between the
true and the numerical ergodic averages: 
\begin{theorem} \label{th:difference} 
Suppose that Assumption \ref{th:ergodic1} holds, that our numerical method with deterministic initial condition is ergodic and of weak order $p$, and
that $\phi:\R^{d}\rightarrow\R$ is a smooth function satisfying Equation \eqref{eq:phi_assumption}.
Then,
\begin{equation}
\lim_{K\rightarrow\infty}\frac{1}{K}\sum_{n=0}^{K-1}\phi(\theta_{n})-\int_{\IR^{d}}\phi(y)\pi(y)dy=-\lambda_{p}h^{p}+\mathcal{O}(h^{p+1})\label{eq:difference}
\end{equation}
where $\lambda_{p}$ is defined as 
\begin{equation}
\lambda_{p}=\int_{\IR^{d}} \int_{0}^{\infty} \left(\frac{1}{(p+1)!}\mathcal{L}^{p+1}-{A}_{p}\right)u(y,t)\pi(y)dt\,dy,\label{eq:lambda}
\end{equation}
and $u(y,t)$ satisfies Equation \eqref{eq:Kolmogorov}. 
\end{theorem}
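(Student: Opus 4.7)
The plan is to combine ergodicity of both the diffusion and the numerical chain with the finite-time expansion of Theorem \ref{thm:talay}, averaging over initial conditions drawn from $\pi$. Since $(\theta_n)_{n\ge 0}$ is assumed ergodic with some invariant measure $\pi_h$, Birkhoff's theorem gives $\lim_{K\to\infty}\tfrac{1}{K}\sum_{n=0}^{K-1}\phi(\theta_n) = \int \phi(y)\pi_h(y)\,dy$ almost surely, with integrability supplied by the polynomial-growth bound on $\phi$ and moment estimates on $(\theta_n)$ coming from Assumption \ref{ass1}. The left-hand side of Equation \eqref{eq:difference} therefore reduces to the deterministic asymptotic bias $\int \phi \, d\pi_h - \int \phi \, d\pi$, independent of the starting point.

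Next I would apply Theorem \ref{thm:talay} with the initial condition $\theta_0 = \theta(0)$ drawn from $\pi$. Invariance of $\pi$ under the Langevin flow gives $\int \IE[\phi(\theta(T))|\theta(0)=y]\,\pi(y)\,dy = \int \phi \, d\pi$ for every $T \ge 0$. Moreover, time homogeneity identifies the backward-Kolmogorov solution $v(y,s) = \IE[\phi(\theta(T))|\theta(s)=y]$ appearing in Theorem \ref{thm:talay} with $u(y,T-s)$, where $u$ solves Equation \eqref{eq:Kolmogorov}. Averaging Equation \eqref{eq:main_result1} against $\pi$, applying invariance once more to the inner expectation $\IE(e(\theta(s),s))$, and substituting $t = T-s$ yields
\begin{equation*}
\int \phi \, d\pi - \int \IE[\phi(\theta_M)|\theta_0=y]\,\pi(y)\,dy = h^p \int_0^T \int_{\IR^d} \Bigl(\tfrac{\mathcal{L}^{p+1}}{(p+1)!} - A_p\Bigr) u(y,t)\,\pi(y)\,dy\,dt + \mathcal{O}(h^{p+1}).
\end{equation*}
With $h$ fixed and $M = T/h \to \infty$, the left-hand side tends to $\int \phi \, d\pi - \int \phi \, d\pi_h$ by ergodicity of the numerical chain, while the inner time integral on the right tends to $\lambda_p$; rearranging signs then produces Equation \eqref{eq:difference}.

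The main obstacle is justifying the limit $T\to\infty$ rigorously, and this splits into two complementary pieces. First, absolute integrability in the definition of $\lambda_p$: since $\mathcal{L}$ annihilates constants and $A_p$ is a purely differential operator, one may replace $u(y,t)$ by $u(y,t)-\int\phi\,d\pi$ inside the bracket, and this difference decays exponentially in $t$ by Assumption \ref{th:ergodic1} via standard Lyapunov/Harris arguments, with at-most-polynomial growth in $y$ as noted in Remark \ref{th:Kolmogorov_assumptions}, so the iterated integral converges. Second, and more delicate, is a uniform-in-$T$ control of the $\mathcal{O}(h^{p+1})$ remainder: the naive telescoping of local weak errors accumulates over $M=T/h$ steps and so grows linearly in $T$, and this growth must be absorbed using the exponential contractivity of the semigroup generated by $\mathcal{L}$ applied to the spatial derivatives of $u$. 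This is the step that genuinely draws on the joint use of Assumptions \ref{ass1} and \ref{th:ergodic1} and is the core of the argument carried out in \cite{AVZ14} referenced in the statement.
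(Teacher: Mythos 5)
Your proposal is correct and follows essentially the same route as the paper, which defers the full argument to \cite{AVZ14} but explicitly describes it as taking the global weak error expansion of Theorem \ref{thm:talay} (Equation \eqref{eq:main_result1}) as the starting point, in the spirit of Talay--Tubaro's Theorem 4: average over initial conditions drawn from $\pi$, identify $v(y,s)$ with $u(y,T-s)$, and pass to the limit $T\to\infty$ using ergodicity of both the diffusion and the chain. You also correctly flag the two genuine technical points (absolute convergence of the iterated integral defining $\lambda_p$ via exponential decay of $u(\cdot,t)-\bar\phi$, and uniform-in-$T$ control of the $\mathcal{O}(h^{p+1})$ remainder), which is precisely where the cited reference does the work.
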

The proof is given in \cite{AVZ14}, and is similar to that in \cite[Theorem 4]{TaT90}
with the main difference being that Equation \eqref{eq:main_result1} is used as the starting point, instead of the specific formula for the Euler-Maruyama method used in \cite{TaT90}. 

Theorem \ref{th:difference} provides an explicit expression for the leading order term of the asymptotic bias of
the numerical method.  It will thus be the key result in our analysis of the asymptotic behaviour of
SGLD later.  Intuitively Equation \eqref{eq:lambda} says that if we want to calculate the error between the numerical and the true ergodic averages, we need to take into account the long time ($t\rightarrow\infty$) discrepancy between the true and the numerical solution given by
\[
\left(\frac{1}{(p+1)!}\mathcal{L}^{p+1}-{A}_{p}\right)u(y,t),
\] 
and then average over all possible initial conditions $y$ with respect to invariant measure $\pi(y)$.  

\subsection{An Illustrative Example}\label{sec:33a}

We illustrate the weak order results above in the case of the Euler-Maruyama scheme applied to the Ornstein-Uhlenbeck process.  For ease of notation, let $f(x)=\frac{1}{2}\nabla \log{\pi(\theta)} $.  The Euler-Maruyama update steps are,
\begin{equation} \label{eq:thetameth}
\theta_{n+1} = \theta_n +hf(\theta_n) +  \sqrt h \xi_n.
\end{equation}
A straightforward calculation  \cite{KZ11} yields that the differential operator $A_1$ in \eqref{eq:numin_taylor_expansion_formal} is given by
\begin{align}\label{eq:A1}
A_1\psi &= \frac12 f^T \hess \psi f + \frac{1}2 \sum_{i=1}^d \psi'''(e_i,e_i, f)
+ \frac{1}8  \sum_{i,j=1}^d \psi^{(4)} (e_i,e_i,e_j,e_j)
\end{align}
where $e_1,\ldots,e_d$ denotes the canonical basis of $\R^d$ and  $\psi'''(\cdot,\cdot,\cdot)$ and $\psi^{(4)}(\cdot,\cdot,\cdot,\cdot)$, are the derivatives of $\psi$, which are  trilinear and quadrilinear forms, respectively. In dimension $d=1$, it reduces to
\[
A_1\psi = \frac12f^2 \psi'' + \frac{1}2  \psi'''
+ \frac{1}8   \phi^{(4)}  
\]

In the case where $\theta\in\IR$ and $\pi(\theta)=e^{-(\theta-\mu)^{2}/2\sigma^{2}}$, the Langevin diffusion
\eqref{eq:Langevin} corresponds to the one dimensional Ornstein-Uhlenbeck process,
\begin{equation}
d\theta(t)=-\frac{1}{2}\left(\frac{\theta(t)-\mu}{\sigma^{2}} \right)dt+dW_{t}\label{eq:OU1}
\end{equation}
For the test function $\phi(\theta)=\theta^{2}$,  a simple calculation
reveals that the solution of Equation \eqref{eq:Kolmogorov} is
\begin{align}
u(\theta,t)=\sigma^{2}\left(1-e^{-t/\sigma^{2}}\right)+\theta^{2}e^{-t/\sigma^{2}}+\frac{\mu \theta}{\sigma^{2}}(1-e^{-t/2\sigma^{2}})e^{-t/2\sigma^{2}}+\frac{\mu^{2}}{4\sigma^{4}}(1-e^{-t/2\sigma^{2}})^{2}.
\label{eq:ouu}
\end{align}
The Euler-Maruyama scheme has weak order $p=1$, and (see  \cite{KZ11} for more details),
\[
\frac{1}{2}\mathcal{L}^{2}-A_{1}=\frac{1}{8\sigma^{4}}(\theta-\mu)\frac{d}{d\theta}-\frac{1}{4\sigma^{2}}\frac{d^{2}}{d^{2}\theta}.
\]
Using this together with Equation \eqref{eq:ouu} for $u(\theta,t)$, we find 
\[
\left(\frac{1}{2}\mathcal{L}^{2}-A_{1}\right)u(\theta,t)=\frac{e^{-t/\sigma^2}}{2 \sigma^2}-\frac{e^{-t/2 \sigma^2} (\mu-\theta) \left(\left(1-e^{-t/2 \sigma^2}\right) \mu+e^{-t/2 \sigma^2} \theta\right)}{4 \sigma^4}\]
Formula \eqref{eq:lambda} now gives
\begin{align}
\lambda_{1}&=-\int_{0}^{\infty}\int_{-\infty}^{+\infty}\left(\frac{e^{-t/\sigma^2}}{2 \sigma^2}-\frac{e^{-t/2 \sigma^2} (\mu-\theta) \left(\left(1-e^{-t/2 \sigma^2}\right) \mu+e^{-t/2 \sigma^2} \theta\right)}{4 \sigma^4}\right)\frac{e^{-(\theta-\mu)^{2}/2\sigma^{2}}}{\sqrt{2\pi}\sigma}d\theta dt  \nonumber \\
&= \frac{1}{4}.
\end{align}
This is in agreement with known results on the stationary distribution of the Euler-Maruyama approximation to the Ornstein-Uhlenback process, see \cite{KZ11}:
\[
\pi_{h}\sim N(\mu,\sigma_{h}^{2}) \quad\text{ where }\quad\sigma_{h}^{2}=\frac{\sigma^{2}}{1-\frac{h}{4}\sigma^{-2}}=\sigma^{2}+\frac{1}{4}h+\mathcal{O}(h^{2}).
\]


\section{Weak Convergence Analysis\label{sec:Weak-Convergence-Analysis}} \label{sec:4}

We study the weak convergence properties of the SGLD
method in the light of Theorems \ref{thm:talay} and \ref{th:difference}.
The analysis in Section \ref{sec:anal_SGLD} implies that at leading
order there is a cost associated with not calculating the likelihood
over all points. Thus, we introduce in Section \ref{sec:anal_mSGLD} 
a modification of the original algorithm which 
has an error that is, asymptotically in $h$, identical to the error of the Euler method, when all data points are taken into account in the calculation of each likelihood gradient.

\subsection{Stochastic Gradient Langevin Dynamics}

\label{sec:anal_SGLD} Theorems \ref{thm:talay} and \ref{th:difference}
imply that in order to characterise the leading order error term both
for the weak convergence and the invariant measure, we need to calculate
the corresponding differential operators $A_{0},A_{1},\cdots$ in
Equation \eqref{eq:numin_taylor_expansion_formal}. To simplify the presentation
and to illustrate the main ideas, we present the calculations only in
the case where $\theta(t)$ is one dimensional. 
We start our calculations by rewriting the SGLD method in the following
form 
\begin{equation}
\theta_{j+1}=\theta_{j}+h\hat{f_j}(\theta_{j})+\sqrt{h}\xi_{j},\label{eq:SGLD_analysis}
\end{equation}
where 
\[
\hat{f_j}(\theta)=\frac{1}{2}\left(\nabla\log{\pi_{0}(\theta)}+\frac{N}{n}\sum_{i=1}^{n}\nabla\log{\pi(X_{\tau_{ji}}|\theta)}\right),
\]
$\tau_j$ is the subset (possibly with repetition) chosen at step $j$ and,  
\begin{equation}
\IE_{\tau_{j}}\hat{f_j}(\theta)=f(\theta):=\frac{1}{2}\nabla\log{\pi(\theta)},\quad\forall\quad n\leq N.\label{eq:estimator}
\end{equation}
Expanding $\phi(\theta_{j+1})$ in powers
of $h$ and then taking expectations with respect to the injected random noise
$\xi_{j}$,
\begin{align*}
\IE_{\xi_{j}}(\phi(\theta_{j+1})|\theta_j)  =  \phi(\theta_{j}) 
&+h\left(\hat{f}_j(\theta_{j})\phi'(\theta_{j})+\frac{1}{2}\phi''(\theta_{j})\right)\\
 & +  \frac{h^{2}}{2}\left(\hat{f}^{2}_j(\theta_{j})\phi''(\theta_{j})+\hat{f}_j(\theta_{j})\phi^{(3)}(\theta_{j})+\frac{1}{4}\phi^{(4)}(\theta_{j})\right)+\mathcal{O}(h^{3}).
\end{align*}
If we now take expectations with respect to $\tau_j$,
\begin{equation}
\IE(\phi(\theta_{j+1})|\theta_j)=\phi(\theta_{j})+h\mathcal{L}\phi(\theta_{j})+\frac{h^{2}}{2}\left(\IE_{\tau}(\hat{f}^{2}_j(\theta_{j}))\phi''(\theta_{j})+f(\theta_{j})\phi^{(3)}(\theta_{j})+\frac{1}{4}\phi^{(4)}(\theta_{j})\right)+\mathcal{O}(h^{3}),\label{eq:expansion_sgld}
\end{equation}
where $\mathcal{L}$ is the generator \eqref{eq:generator} of Equation \eqref{eq:Langevin}. We
thus see that the SGLD method is a first order weak method and, dropping the indexing by $j$ for notational convenience from now on,
\[
A_{1}(\pi)\phi=\frac{1}{2}\left(\IE_{\tau}(\hat{f}^{2}(\theta))\phi''+f(\theta)\phi^{(3)}+\frac{1}{4}\phi^{(4)}\right).
\]
The asymptotic bias in Equation \eqref{eq:lambda} has an expansion based on the differential operator,
\begin{eqnarray} \label{eq:exp_SGLD}
\frac{1}{2}\mathcal{L}^{2}-A_{1} & = & \frac{1}{2}\left(f(\theta)f'(\theta)+\frac{1}{2}f''(\theta)\right)\frac{d}{d\theta}+\frac{1}{2}\left(f'(\theta)+f^{2}(\theta)-\IE_{\tau}(\hat{f}^{2}(\theta))\right)\frac{d^{2}}{d\theta^{2}} \nonumber\\
 & = & \frac{1}{2}\left(f(\theta)f'(\theta)+\frac{1}{2}f''(\theta)\right)\frac{d}{d\theta}+\frac{1}{2}\left(f'(\theta)-\text{Var}(\hat{f}(\theta))\right)\frac{d^{2}}{d\theta^{2}}
\end{eqnarray}
We thus see that in the case of SGLD the leading order error term contains an extra factor of $-\frac{1}{2}\text{Var}(\hat{f}(\theta))\nabla_{\theta}^{2}$ when compared to the Euler method ($n=N$), in which case $\text{Var}(\hat{f}(\theta))=0$. This can be understood as the penalty associated with not using all the available points for calculating the likelihood at every time step. It results in an extra term in the corresponding error expressions given in Theorems  \ref{thm:talay}
and \ref{th:difference} when compared with the Euler method. More precisely, for $n \ll N$ the term $-\frac{1}{2}\text{Var}(\hat{f}(\theta))$ is of size $\mathcal{O}(N^{2})$ thus making the leading order error term $O(hN^{2})$  in  Equation \eqref{eq:main_result1}. 


\begin{example}
We illustrate the above findings on the toy model discussed in Section \ref{sec:GaussianModel}.
In particular, using the  expression for $u(\theta,t)$ from Section \ref{sec:33a} (replacing $\mu$ and $\sigma$ by $\mu_{p}$ and $\sigma_{p}$ respectively), and that $\operatorname{Var}(\hat{f}(\theta))=\operatorname{Var}(B)$ for this simple model, the extra term in Equation \eqref{eq:exp_SGLD} when compared with the Euler method is now given by 
\[
-\frac{1}{2}\operatorname{Var}(B) \partial^{2}_{\theta} u(\theta,t)=-\operatorname{Var}(B)e^{-t/\sigma_{p}^{2}}.
\]
A simple integration  of this term according to the formula \eqref{eq:lambda} gives that the overall contribution of the extra term, which is,
\[
\sigma_{p} \operatorname{Var}(B)=\frac{\operatorname{Var}(B)}{2A},
\]
and thus agreeing with Equation \eqref{eq:toymodelbias} derived in t Section \ref{sub:toyBias}.
\end{example}

\subsection{Modified SGLD}\label{sec:anal_mSGLD} 

As we have seen in the previous section, the
SGLD method introduces an extra term $-\frac{1}{2}\text{Var}(\hat{f}(\theta))\nabla_{\theta}^{2}$ in the leading
order error term related to the weak error (Theorem \ref{thm:talay}) and to the ergodic averages (Theorem \ref{th:difference}).  When $n\ll N$, this term is of order $\mathcal{O}(hN^{2})$. 
In this section we will explore a modification of SGLD (mSGLD) for which this term is removed, so that the leading order term is exactly the same as for the Euler-Maruyama scheme.  Specifically, the mSGLD updates are,
\begin{equation}
\theta_{j+1}=\theta_{j}+h\hat{f}(\theta_{j})+\sqrt{h}\left(1-\frac{h}{2}\text{Var}\hat{f}(\theta_{j})\right)\xi_{j}\label{eq:SGLD_mod}.
\end{equation}
We can again derive the weak order expansion as in the previous subsection.
Our first step is to expand $\phi(\theta_{j+1})$ in powers of $h$ and then take expectations with respect to the random variable $\xi_{j}$. In particular, we obtain
\begin{align*}
\IE_{\xi_{j}}(\phi(\theta_{j+1}))  = & \phi(\theta_{j})+h\left(\hat{f}_j(\theta_{j})\phi'(\theta_{j})+\frac{1}{2}\phi''(\theta_{j})\right)\\
 & +  \frac{h^{2}}{2}\left(\left[\hat{f}^{2}_j(\theta_{j})-\text{Var}\hat{f}(\theta_{j})\right]\phi''(\theta_{j})+\hat{f}_j(\theta_{j})\phi^{(3)}(\theta_{j})+\frac{1}{4}\phi^{(4)}(\theta_{j})\right)+\mathcal{O}(h^{3}).
\end{align*}
Taking expectations with respect to the random sampling and using  Equation \eqref{eq:estimator}, we obtain
\begin{align} \label{eq:expansion_msgld}
\IE(\phi(\theta_{j+1}))=&\phi(\theta_{j})+h\mathcal{L}\phi(\theta_{j}) \\ 
&+\frac{h^{2}}{2}\left(\left[\IE_{\tau_j}(\hat{f}_j^{2}(\theta_{j}))-\text{Var}\hat{f}(\theta_{j})\right]\phi''(\theta_{j})+f(\theta_{j})\phi^{(3)}(\theta_{j})+\frac{1}{4}\phi^{(4)}(\theta_{j})\right) +\mathcal{O}(h^{3}) \nonumber,
\end{align} 
where $\mathcal{L}$ is the generator of Equation \eqref{eq:Langevin}. We thus see that the mSGLD is a first order weak method and
\[
A_{1}(\pi)\phi=\frac{1}{2}\left(\left[\IE_{\tau}(\hat{f}^{2}(\theta))-\text{Var}\hat{f}(\theta)\right]\phi''+f(\theta)\phi^{(3)}+\frac{1}{4}\phi^{(4)}\right)
\]
Using the expression for $\mathcal{L}^{2}$ as in the case of SGLD, we have that,
\begin{align}
\frac{1}{2}\mathcal{L}^{2}-A_{1}
&=\frac{1}{2}\left(f(\theta)f'(\theta)+\frac{1}{2}f''(\theta)\right)\frac{d}{d\theta}+\frac{1}{2}\left(f'(\theta)+f^{2}(\theta)-\IE_{\tau}(\hat{f}^{2}(\theta))+\text{Var}\hat{f}(\theta) \right)\frac{d^{2}}{d\theta^{2}}
\nonumber \\
&=\frac{1}{2}\left(f(\theta)f'(\theta)+\frac{1}{2}f''(\theta)\right)\frac{d}{d\theta}+\frac{1}{2}f'(\theta)\frac{d^{2}}{d\theta^{2}}\label{eq.mSGLDbias}
\end{align}
We see that the leading order term in the weak error and the
error for the ergodic averages is  the same as for the Euler method, which uses all data at every step. In higher dimensions, a similar calculation gives the mSGLD updates,
\begin{equation}
\theta_{j+1}=\theta_{j}+h\hat{f}_j(\theta_{j})+\sqrt{h}\left(I-\frac{h}{2}\text{Cov}\hat{f}(\theta_{j})\right)\xi_{j}\label{eq:SGLD_mod_high}
\end{equation}
where 
\[
\text{Cov}\hat{f}(\theta)=\IE\left[\left(\hat{f}(\theta)-\IE(\hat{f}(\theta))\right)\left(\hat{f}(\theta)-\IE(\hat{f}(\theta))\right)^{\top}\right]
\]
and $\xi_{j}$ is a $d$-dimensional standard normal random variable.

\begin{remark}\label{rem:varMSGLD}

Except for special cases, $\Var f(\theta_{j})$ does not have a closed form. The simplest possible way to proceed without it is to replace it by an unbiased estimator, for example in case of sampling without replacement,
\[
\widehat{\Var}\widehat{f}_j(\theta):=\frac{N(N-n)}{n(n-1)}\sum_{i=1}^{n}\left(\nabla\log\pi\left(x_{\tau_{ji}}\mid\theta\right)-\frac{\widehat{f}_j(\theta)}{N}\right)^{2}.
\]
This replacement does not change Equation \eqref{eq.mSGLDbias} because
the smallest order contribution to Equation \eqref{eq:expansion_sgld}
is of the form 
\[
-h^{2}\IE\left[\widehat{\Var} f(\theta_j)\xi_{j}^{2}\right]=-h^{2}\Var f(\theta_j).
\]
However, estimating the variance of the stochastic gradient will affect  higher order terms in $h$. For fixed $h$  these terms may have larger contribution to the overall error depending on the choice of $n$ and $N$. In fact, this is true even if  we use the exact variance for the toy model in Section \ref{sub:toyBias}. More precisely, we compare the bias of the mSGLD and the SGLD in Equation \eqref{eq:BiasMSGLDvsSGLD}  notice that $h^2$ term might be larger depending on the choice of $n$ and $N$.
\end{remark}
\section{Finite Time Sample Averages}
\label{sec:5}
Having focused on the SGLD in the asymptotic regime, we will now provide non-asymptotic analysis of the mean squared error (MSE) of the finite time sample averages of the SGLD. In particular, we will decompose the MSE into  bias and variance. The main result of this section will be of the form 
\begin{align}
\begin{aligned}\label{eq:MSEboundIntro}
\text{Bias:}&& \left|\EE \frac{1}{K}\sum^{K-1}_{i=0}\phi(\theta_i) -\int \phi(x)\pi(x)dx\right| &=  O\left(h+\frac{1}{K h}\right)\\
\text{MSE:}&& \EE\left(\frac{1}{K}\sum^{K-1}_{i=0}\phi(\theta_i) -\int \phi(x)\pi(x)dx\right)^{2} & =  O\left(h^{2}+\frac{1}{K h}\right) 
\end{aligned}
\end{align}

\begin{remark}\label{rem:cmpDecStep}
In \cite{TehThierryVollmerSGLD2014}, a central limit theorem was provided for the decreasing step size SGLD which shows a convergence rate of $O(K^{-\frac{1}{3}})$. At first sight, the bound in Equation \eqref{eq:MSEboundIntro} seems better because of the $\frac{1}{Kh}$ term in the upper bound. However, due to the bias, an additional term of order $O(h^2)$ appears. In order to compare  \eqref{eq:MSEboundIntro} with the previous result of  \cite{TehThierryVollmerSGLD2014}, we optimise the sum of both terms over the step size $h$. This results in a bound on the MSE of the SGLD of order  $O(K^{-\frac{2}{3}})$ and agrees with the rate achieved by the decreasing step size SGLD. This agreement between decreasing step size discretisation and fixed step size discretisation is, to our knowledge, not a widely-known observation in the literature.  In contrast, for standard MCMC algorithms the MSE is bounded by order $O(K^{-1})$ due to the Metropolis-Hastings correction that removes the bias.
%
Nevertheless, experimental results in the literature demonstrate that the SGLD might be advantageous in the initial transient phase of learning, see e.g.\cite{PatTeh2013a,Chen2014} \end{remark}

In Section \ref{sec:PoissonMain} we will focus on establishing the bound in Equation \eqref{eq:MSEboundIntro} which is an extension of the work by \cite{2010MattinglyPoisson}. The authors obtained similar results for finite time sample averages of discretisations of diffusions of the form
\begin{equation}
d\theta_{t}=f(\theta_{t})+g(\theta_{t})dW_{t}\label{eq:poissonSDE}
\end{equation}
on the torus which we review subsequently in Section \ref{sec:PoissonPrelim}.

\subsection{Preliminaries on the Poisson Equation and Time Averages \label{sec:PoissonPrelim}}
In the following a connection between  time averages of the diffusion and the corresponding Poisson equation will be presented. For a more elaborate description of this technique
we point the reader to Section 4.2 of \cite{2010MattinglyPoisson}
and references therein.

The Poisson equation is an elliptic PDE on the basis of the  generator associated with Equation \eqref{eq:poissonSDE}. The generator of Equation \eqref{eq:poissonSDE} is
\[
\mathcal{L}\psi=\nabla\psi\cdot\nabla f+\frac{1}{2}g(\theta)^\top\nabla^{2}\psi g(\theta),
\]
while the Poisson equation is given by
\begin{equation}
\mathcal{L}\psi=\phi-\bar{\phi}\text{ on }\RR^{d}\label{eq:poisson}
\end{equation}
where $\phi$ is a test function and $\bar{\phi}:=\int\phi(x)\pi(dx)$ with $\pi$ being the invariant distribution of \eqref{eq:poissonSDE}.  
For applications in Bayesian statistics $\pi$ represents the posterior and the quantity $\bar{\phi}$ the posterior expectation of interest. The posterior expectation $\bar{\phi}$ is estimated by the time average $\frac{1}{t}\int_{0}^{t}\phi(\theta({s}))ds$ of the Langevin dynamics. The difference between the two can be expressed explicitly by using  Itô's formula on the  solution $\psi$ of the Poisson equation
\begin{eqnarray*}
\psi\left(\theta(t)\right)-\psi\left(\theta(0)\right) & = & \int_{0}^{t}\phi(\theta{(s)})-\bar{\phi}ds+\int_{0}^{t}\nabla\psi\left(\theta({s})\right)\cdot g(\theta({s}))dW_{s},\\
\frac{1}{t}\int_{0}^{t}\phi(\theta({s}))ds-\bar{\phi} & = & \frac{1}{t}\left(\psi\left(\theta(t)\right)-\psi\left(\theta(0)\right)\right)-\frac{1}{t}\int_{0}^{t}\nabla\psi\left(\theta({s})\right)\cdot g(\theta({s}))dW_{s}.
\end{eqnarray*}
If the first term and the variance of the second term (the martingale term) on the right hand side can be bounded, an error bound for the time average is obtained. 

In this article, we are interested in the time average of the Euler discretisation
and the SGLD. We can build on the ideas of Section 5 in \cite{2010MattinglyPoisson}
which considers time discretisations of Equation (\ref{eq:poissonSDE}) 
of the following form
\[
\theta_{k+1}=\theta_{k}+hf(\theta_{k},h)+\sqrt{h}g(\theta_{k},h)\eta_{k},\quad\eta_k\sim\mathcal{N}\left(0,I\right).
\]
In \cite{2010MattinglyPoisson} a Taylor expansion is used to express
\[
\Delta\psi(\theta_{k+1}):=\psi(\theta_{k+1})-\psi(\theta_{k})=h\left(A_{0}\psi\right)(\theta_{k})+R_{k}
\]
where $R_{k}$ is the remainder term. The term $A_0$ was introduced in Equation \eqref{eq:numin_taylor_expansion_formal} in Section \ref{sec:21}.

Using that $\mathcal{L}\psi=\phi-\bar{\phi}$, summing over $k$ and dividing
by $hK$ yields 
\[
\hat{\phi}_{K}:=\frac{1}{K}\sum_{k=0}^{K-1}\phi(\theta_{k})=\bar{\phi}+\frac{1}{Kh}\left(\psi\left(\theta_{K}\right)-\psi\left(\theta_{0}\right)\right)-\frac{1}{hK}\sum_{k=0}^{K-1}h\left(A_{0}-\mathcal{L}\right)\psi(\theta_{k})-\frac{1}{Kh}\sum_{k=0}^{K-1}R_{k}.
\]
Controlling $A_{0}-\mathcal{L}$ and the remainder gives rise to Theorem
5.1 and 5.2 in \cite{2010MattinglyPoisson} stating that 
\begin{align}\label{eq:MattinglyFiniteTimeBounds}
\begin{aligned}
\left|\EE\hat{\phi}_{K}-\bar{\phi}\right| &\leq C\left(h+\frac{1}{h\cdot K}\right)\\
 \EE\left(\hat{\phi}_{K}-\bar{\phi}\right)^{2}&\leq C\left(h^{2}+\frac{1}{h\cdot K}\right). 
\end{aligned}
\end{align}

In particular, these results were derived for discretisations
of SDEs on the torus. This simplifies the presentation because the
derivates of $\psi$ are bounded on a compact set. 
However, the same
arguments hold if the following assumption is imposed instead
\begin{equation}\label{eq:supExpec}
\sup_{k}\EE\norm{\psi^{(i)}\left(\theta_{k}\right)}<\infty \text{ for }i=1,\dots,4.
\end{equation}
verifying this condition will allow us to work on $\mathbb{R}^d$.

\subsection{The Bias and the MSE of Finite Time SGLD Averages\label{sec:PoissonMain}}
We consider the SDE 
\begin{equation}
d\theta_{t}=f(\theta_{t})dt+g\left(\theta_{t}\right)dW_{t}.\label{eq:SDE}
\end{equation}
with  $g=I$ being the identity matrix but we keep $g$ in order to make the presentation
clearer.
Based on this setup the recursion of the corresponding SGLD reads as follows
\[
\Delta_{k+1}=\theta_{k+1}-\theta_{k}=\hat{f}_{k}h+h^{\frac{1}{2}}g_{k}\xi_{k+1}
\]
where  $\hat{f}_k$ is an unbiased estimate of $f$.
The focus of this section is to establish results similar to Equation \eqref{eq:MattinglyFiniteTimeBounds} for the SGLD. They will be formulated in Theorem \ref{thm:FiniteTimeBiasVariance}.

For the readability of the subsequent calculation we use the following notations $$\Delta_{k+1}=\theta_{k+1}-\theta_{k}, \phi_{k}=\phi\left(\theta_{k}\right),$$
$\hat{f}_{k}=\hat{f}(\theta_{k},\tau_{k},h)$ for the estimate of
the drift, $g_{k}=g(\theta_{k},h)=I$, $\poisol_{k}=\poisol(\theta_{k})$,
$V_{k}=V(\theta_{k})$ and $D^{k}\psi_{k}=D^{k}\psi(\theta_{k})$.
The term $A_0$, as introduced in in Equation \eqref{eq:numin_taylor_expansion_formal} in Section \ref{sec:21}, satisfies $A_0=\mathcal{L}$ but we keep $A_0$ for clarity. Thus, we have 
\[
A_{0}\psi_k=\derivf\psi\cdot\mathbb{E}_{\tau}\hat{f}(\theta_k,\tau,h)+\frac{1}{2}S\left(\cdot,h\right):\hess \psi(\theta_k)
\]
where $S(x,h)=g(x,h)g(x,h)^{T}=I$. 

We use the following third order Taylor expansion on $\poisol(\theta_{k+1})-\poisol(\theta_{k})$ in order to obtain a bound on $frac{1}{K}\sum_{k=0}^{K-1}\left(\phi_{k}-\bar{\phi}\right)$
\begin{eqnarray*}
\psi_{k+1} & = & \poisol_{k}+\derivf\poisol_{k} \cdot \Delta_{k+1}+\frac{1}{2}\Delta_{k+1}^T \hess\poisol_{\step}\Delta_{k+1}+\frac{1}{6}\deriv{3}{\poisol_{k+1}}\left(\Delta_{k+1},\Delta_{k+1},\Delta_{k+1}\right)+R_{k+1}\\
R_{k+1} & =\frac{1}{6} & \int_{0}^{1}s^{3}\deriv{4}{\poisol}\left(s\theta_{k}+(1-s)\theta_{\step+1}\right)\left(\Delta_{k+1},\Delta_{k+1},\Delta_{k+1},\Delta_{k+1}\right)ds.
\end{eqnarray*}
Here $\deriv{3}{\poisol}$ and $\deriv{4}{\poisol}$ are  the third and fourth order derivative in the form of a trilinear and a quadrilinear form, respectively.
 In this setting, a third order expansion is required in order to obtain the $h^2$ term in the $h^2+\frac{1}{T}$ bound in the MSE (see Equation \eqref{eq:MattinglyFiniteTimeBounds} or Theorem \ref{thm:FiniteTimeBiasVariance}). More precisely, the remainder of this expansion is forth order which together with the term $\sqrt{h}\xi_m$ in Equation \eqref{eq:SGLD} contributes to the  $h^2$ error term.
In order to make the connection to the Poisson equation, we write the expansion above in terms of $A_{0}$. This yields
\begin{eqnarray*}
\poisol_{k+1} & = & \poisol_{k}+hA_{0}\poisol_k+h^{\frac{1}{2}}\derivf\poisol_{k}\cdot \left(g_{k}\xi_{k+1}\right)+h\derivf\poisol_{k} \cdot \left(\underbrace{
\hat{f}_{k}-\mathbb{E}_{\tau}\hat{f}(\theta_{k},\tau,h)}_{H_{k}}\right)+h^{\frac{3}{2}}\left(g_{k}\xi_{k+1}\right)^T\hess\psi_{k}\hat{f}_{k}\\
 &  & +\frac{1}{2}\hat{f}_{k}^\transpose h^{2}\hess\poisol_{k}\hat{f}_{k}+\frac{1}{6}\deriv{3}{\poisol_{k}}\left(\Delta_{k+1},\Delta_{k+1},\Delta_{k+1}\right)+r_{k+1}+R_{k+1}
\end{eqnarray*}
where $r_{k+1}=h\frac{1}{2}\left(\left(g_{\step}\xi_{\step+1}\right)^\transpose\hess\poisol_{k}\left(g_{\step}\xi_{\step+1}\right)-S(x,h)\right)$.

Notice that $\frac{1}{hK} \sum_{k=0}^{K-1}hA_{0}\poisol_k =\frac{1}{K}\sum_{k=0}^{K-1}\left(\phi_{k}-\bar{\phi}\right)$ is the error of interest. In order to control this error, we sum the expression for $\psi_{k+1}$ for $k=0,\dots,K-1$ and divide by $T=hK$. Grouping the terms for subsequent inspection gives
 \begin{eqnarray}
\frac{\psi_{K}-\psi_{0}}{Kh} & = & \frac{1}{K}\sum_{k=0}^{K-1}\left(\phi_{k}-\bar{\phi}\right)+\underbrace{\sum_{k=0}^{K-1}\left(\mathcal{L}-A_{0}\right)\psi_{k}}_0\nonumber \\
 & = & \frac{1}{T}\underbrace{\sum_{k=0}^{K-1}r_{\step+1}}_{M_{1,K}}+\frac{1}{T}\underbrace{h^{\frac{1}{2}}\sum_{k=0}^{K-1}\derivf \poisol_{\step}\left(g_{\step}\xi_{\step+1}\right)}_{M_{2,K}}+\frac{1}{T}\underbrace{h^{\frac{3}{2}}\sum_{k=0}^{K-1}\hat{f}_{\step}^\transpose\hess\psi_{\step}\left(g_{\step}\xi_{\step+1}\right)}_{M_{3,K}}\nonumber \\
 &  & +\frac{1}{T}\underbrace{h\sum_{k=0}^{K-1}\derivf\poisol_{\step}\cdot\left(\hat{f}_{k}-\mathbb{E}_{\tau}f(\theta_{k},\tau,h)\right)}_{M_{4,K}}+\frac{1}{T}\underbrace{\frac{1}{2}\sum_{k=0}^{K-1}h^{2}\hat{f}_{\step}^\transpose\hess\poisol_{\step}\hat{f}_{\step}}_{S_{1,K}}\nonumber \\
 &  & +\frac{1}{T}\underbrace{\sum_{k=0}^{K-1}R_{k+1}}_{S_{2,K}}+\frac{1}{T}\underbrace{\frac{1}{6}\sum_{k=0}^{K-1}\deriv{3}{\poisol_{\step}}\left(\Delta_{\step+1},\Delta_{\step+1},\Delta_{\step+1}\right)}_{S_{3,K}}. \label{eq:sampleAvg}
\end{eqnarray}
where the $M_{i,k}$ indicate the martingale terms and the $S_{i,k}$
other remainder terms. We split 
\[
S_{\text{3,K}}=M_{0,K}+\tilde{M}_{0,K}+S_{0,K}+\tilde{S}_{0,K}
\]
in terms of
\begin{eqnarray*}
M_{0,K} & = & \frac{1}{6}h^{\frac{3}{2}}\sum_{k=0}^{K-1}\left(\deriv{3}{\poisol_{k}}\left(\left(g_{k}\eta_{\step+1}\right),\left(g_{k}\eta_{\step+1}\right),\left(g_{k}\eta_{\step+1}\right)\right)\right)\\
\tilde{M}_{0,K} & = & \frac{1}{2}\sum_{k=0}^{K-1}h^{\frac{5}{2}}\deriv{3}{\poisol_{k}}\left(\hat{f}_{\step},\hat{f}_{\step},g_{k}\eta_{\step+1}\right)\\
S_{0,K} & = & \frac{1}{6}\sum_{k=0}^{K-1}3h^{2}3\deriv{3}{\poisol_{k}}\left(g_{k}\eta_{\step+1},g_{k}\eta_{\step+1},\hat{f}_{\step}\right)\\
\tilde{S}_{0,K} & = & \frac{1}{6}\sum_{k=0}^{K-1}h^{3}\deriv{3}{\poisol_{k}}\left(\hat{f}_{\step},\hat{f}_{\step},\hat{f}_{\step}\right).
\end{eqnarray*}

Rearranging Equation \eqref{eq:sampleAvg} for $\frac{1}{K}\sum_{k=0}^{K-1}\left(\phi_{k}-\bar{\phi}\right)$ and controlling the resulting right hand side of Equation \eqref{eq:sampleAvg} gives rise to the following theorem.

\begin{theorem} \label{thm:FiniteTimeBiasVariance} Suppose that there
exists a function $V$ such that the following three assumptions hold: 
\begin{enumerate}
\item {There are  $p_{\poisol,1},\dots p_{\poisol,4}\in (0,\infty)$ such that the derivatives of the solution $\psi$ to the Poisson equation
satisfy the following bound 
\begin{equation}
\left\Vert \deriv{k}{\psi}\right\Vert \lesssim V^{p_{\poisol,k}},\quad\text{ for }k=0,\dots,4.\label{eq:PoissonDerivBounds}
\end{equation}
} 
\item {The drift $f$ and the error from the estimate $H:=\hat{f}(\theta,\tau)-f(\theta)$ satisfy 
\begin{align}
\begin{aligned}\label{eq:subsamplingCond}
\EE_{\tau}H\left(\theta,\tau\right)^{2p} & \lesssim  V(\theta)^{p}\quad\forall p\leq p^{\star}\\
\left\Vert f\right\Vert ^{2} & \lesssim  V.
\end{aligned}
\end{align}
for $p^{\star}=\max\left\{ 2p_{\poisol,2}+2,2p_{\poisol,4}+4,2p_{\poisol,3}+1,2p_{\poisol,3}+3\right\} $.
Moreover, we suppose that the $\EE V^{p}(\theta_{k})$ is bounded
from above and that this bound is independent of $k$, that is 
\begin{equation}
\sup_{k}\EE V^{p}\left(\theta_{k}\right)<\infty,\quad\forall p\leq p^{\star}.\label{eq:LyapunovBddApriori}
\end{equation}
} 
\item {$V$ satisfies 
\begin{equation}
\sup_{s}V\left(s\theta_1+(1-s)\theta_2\right)^{p}\lesssim V(\theta_1)^{p}+V(\theta_2)^{p},\quad \text{for all } \theta_1,\theta_2,p\leq p^{\star}.\label{eq:lineV}
\end{equation}
} 
\end{enumerate}
Under these assumptions there exists $h_0>0$ and constant $C$ such that for all $h<h_0$
\begin{eqnarray}
\textup{Bias}\left(\hat{\phi}_{K}\right) & = & \left|\EE\hat{\phi}_{K}-\bar{\phi}\right| \leq C\left(h+\frac{1}{K h}\right)\label{eq:bias}\\
\EE\left(\hat{\phi}_{K}-\bar{\phi}\right)^{2} & \leq & C\left(h^{2}+\frac{1}{K h}\right)\label{eq:vairance}
\end{eqnarray}
where 
\[
\hat{\phi}_{K}=\frac{1}{K}\sum_{k=0}^{K-1}\phi(\theta_{k})\quad\text{ and }\quad\bar{\phi}=\EE_{\pi}\phi.
\]
\end{theorem}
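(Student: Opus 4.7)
The plan is to build on the identity already derived in Equation \eqref{eq:sampleAvg}, which decomposes
\[
\hat{\phi}_{K}-\bar{\phi} \;=\; \frac{\psi_{K}-\psi_{0}}{Kh}\;-\;\frac{1}{T}\Bigl[M_{1,K}+M_{2,K}+M_{3,K}+M_{4,K}+M_{0,K}+\widetilde{M}_{0,K}+S_{1,K}+S_{2,K}+S_{0,K}+\widetilde{S}_{0,K}\Bigr],
\]
and then to control each summand by means of the three assumptions. Throughout, the Lyapunov-type hypothesis \eqref{eq:LyapunovBddApriori} combined with the bound \eqref{eq:PoissonDerivBounds} on the derivatives of $\psi$ and with $\|f\|^{2}\lesssim V$ from \eqref{eq:subsamplingCond} provides a uniform-in-$k$ control of every moment of $\psi^{(j)}(\theta_{k})$, of $f(\theta_{k})$, and (via $|\hat{f}|^{2q}\lesssim|f|^{2q}+|H|^{2q}\lesssim V^{q}$) of $\hat{f}(\theta_{k})$ up to order $p^{\star}$.

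For the bias bound \eqref{eq:bias}, I would first take expectations and observe that every $M$-type term has zero mean: each increment is conditionally centered, either in the injected Gaussian noise $\xi_{k+1}$ (for $M_{1,K},M_{2,K},M_{3,K},M_{0,K},\widetilde{M}_{0,K}$) or in the subsampling noise $\tau_{k}$ (for $M_{4,K}$). The boundary term satisfies $|\mathbb{E}(\psi_{K}-\psi_{0})|/(Kh)\leq 2\sup_{k}\mathbb{E}|\psi(\theta_{k})|/(Kh)\leq C/(Kh)$. Each $S$-type term carries one extra factor of $h$ (for $S_{1,K}$ and $S_{0,K}$) or more (for $S_{2,K}$ and $\widetilde{S}_{0,K}$) relative to the factor $h$ generated by the operator $A_{0}$, so after division by $T=Kh$ their expectations give $O(h)$. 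Summing these estimates yields exactly the claimed bias bound.

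For the MSE bound \eqref{eq:vairance}, I would apply the inequality $(a_{1}+\cdots+a_{m})^{2}\leq m(a_{1}^{2}+\cdots+a_{m}^{2})$ to reduce to bounding each summand in $L^{2}$. The boundary contribution is $O(1/(Kh)^{2})$, which is absorbed into $O(1/(Kh))$. For each martingale term I would use the orthogonality of martingale increments to write $\mathbb{E}M_{i,K}^{2}=\sum_{k=0}^{K-1}\mathbb{E}|\Delta M_{i,K}^{(k)}|^{2}$; the dominant contribution comes from $M_{2,K}$ whose per-step conditional variance is $O(h)\cdot\mathbb{E}|\nabla\psi(\theta_{k})|^{2}=O(h)$, giving $\mathbb{E}(M_{2,K}/T)^{2}=O(Kh)/(Kh)^{2}=O(1/(Kh))$, while every other martingale contributes at strictly smaller order ($O(1/K)$ or $O(h/K)$). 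For the remainder terms, a pathwise estimate of the form $|S_{i,K}|\leq C\,Kh^{2}\cdot(\text{uniformly integrable moment})$ yields $\mathbb{E}(S_{i,K}/T)^{2}=O(h^{2})$.

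The main obstacle is the Taylor remainder $S_{2,K}$, because $R_{k+1}$ involves $\deriv{4}{\psi}$ evaluated at an \emph{intermediate} point $s\theta_{k}+(1-s)\theta_{k+1}$, not at a Markov-chain iterate where \eqref{eq:LyapunovBddApriori} applies directly. This is precisely the role of the extrapolation assumption \eqref{eq:lineV}: it converts the intermediate-point bound into $|\deriv{4}{\psi}(s\theta_{k}+(1-s)\theta_{k+1})|\lesssim V(\theta_{k})^{p_{\psi,4}}+V(\theta_{k+1})^{p_{\psi,4}}$, after which $\mathbb{E}|R_{k+1}|+\mathbb{E}R_{k+1}^{2}$ is estimated using $\mathbb{E}|\Delta_{k+1}|^{4q}\lesssim h^{2q}$, which in turn follows from \eqref{eq:subsamplingCond}. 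The choice of $p^{\star}$ in the statement is exactly what is needed to keep all the products $V^{q_{1}}|\hat{f}|^{q_{2}}$ appearing along the way within the range where assumption \eqref{eq:subsamplingCond} and \eqref{eq:LyapunovBddApriori} provide uniform bounds. Once $h_{0}$ is chosen small enough that the Taylor-remainder estimates above are valid for all $h<h_{0}$, collecting the pieces gives the bias bound $O(h+1/(Kh))$ and the MSE bound $O(h^{2}+1/(Kh))$.
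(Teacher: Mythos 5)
Your plan is correct and follows essentially the same route as the paper's own proof in Appendix \ref{sec:finitetimeproof}: the same decomposition from Equation \eqref{eq:sampleAvg}, zero-mean martingale terms for the bias, orthogonality of martingale increments with $M_{2,K}$ supplying the dominant $O(1/(Kh))$ contribution for the MSE, the extrapolation condition \eqref{eq:lineV} to handle the intermediate point in the Taylor remainder $S_{2,K}$, and the Lyapunov moments up to $p^{\star}$ (with Cauchy--Schwarz to decouple $V_i^p$ and $V_j^p$ in the squared remainder sums) to make everything uniform in $k$. No substantive differences from the paper's argument.
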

\begin{proof}
For each term we bound the term inside the sum  by a power of $V^{p}$  and then obtain an overall bound using $\sup_{i}\EE V_{i}^{p}<\infty$. For example, $\frac{1}{T}\mathbb{E}S_{1,K}$ can be bounded as follows
\begin{eqnarray*}
\frac{1}{T}\mathbb{E}S_{1,K} & \lesssim & \frac{1}{T}\EE\sum_{k=0}^{K-1}h^{2}V_{k}^{p_{\poisol,2}}\EE_{\tau_{k}}\left\Vert \hat{f}_{\step}\right\Vert ^{2}\\
 & \lesssim & \frac{1}{T}\sum_{k=0}^{K-1}h^{2}\sup_{i}\EE V_{i}^{p_{\poisol,2}+1} \lesssim  \frac{1}{T}h^{2}K\lesssim h.
\end{eqnarray*}
The details of this computation are contained in Appendix \ref{sec:finitetimeproof}.
\end{proof}

Theorem \ref{thm:FiniteTimeBiasVariance} and the results for the decreasing step size SGLD  \cite{TehThierryVollmerSGLD2014} hold under assumptions formulated in terms of the solution $\psi$ of the Poisson equation. More precisely, the crucial step is to establish a bound of the form 
$$\sup_{k}\EE\norm{\deriv{i}{\poisol_{k+1}}\left(\theta_{k}\right)}<\infty \text{ for }k=1,\dots,4.$$
This bound is established using Equations \eqref{eq:PoissonDerivBounds} and \eqref{eq:LyapunovBddApriori}
\begin{equation*}
\sup_{k}\EE\norm{D^{(i)}\psi\left(\theta_{k}\right)}\lesssim\sup_{k}\EE\norm V^{p_{\poisol,i}} < \infty \text{ for }i=1,\dots,4.
\end{equation*}
Thus, we are left with finding an appropriate Lyapunov function $V$ such that Equations \eqref{eq:PoissonDerivBounds} and \eqref{eq:LyapunovBddApriori} hold. In Appendix \ref{sec:regPoisson}, we formulate strong sufficient conditions on $\pi$ that ensure that these assumptions are satisfied and that Theorem \ref{thm:FiniteTimeBiasVariance} is applicable.

\section{An Analytic Investigation of the Toy Model \label{sec:anal_toy}}
We now extend our analysis of the one-dimensional Gaussian toy model introduced in Section \ref{sec:GaussianModel} beyond the general results of the previous two sections. 
More precisely, in Section  \ref{sub:comp_effort}, we compare the Euler method, the SGLD and the mSGLD by comparing the computational cost for fixed level accuracy specified in terms of the mean square error in estimating the second moment ($\text{MSE}_{2}$), optimising over the step size $h$, the subsample size $n$ and the number of steps $M$. A numerical solution to the resulting optimisation problem demonstrates that the SGLD is advantageous  in the lower accuracy regime while it degenerates to $n=N$ in the high accuracy regime. On the other hand the mSGLD does not degenerate and seems to maintain a constant speed up  compared to the Euler method. In Section  \ref{sub:toyFixedAndIncreasingN} we then consider the $\text{MSE}_2$ and use an analytic expression to study the behaviour of these algorithms for growing $N$.   This allows us to extend the analysis of Sections \ref{sec:GaussianModel} and  \ref{sec:Weak-Convergence-Analysis}  (in which we only consider the case  limit $h\rightarrow 0$) and study the asymptotic bias of the SGLD and  the mSGLD by scaling both $n$ and $h$  in $N$.


In Section \ref{sub:toyLimitNinfty} we finally adopt a different viewpoint by considering a fixed value of our parameter $\theta$, denoted by $\theta^{\dagger}$, while we take expectations with respect to the realisation of the data $\{X_i\}$. This enables us to study  how $\EE_X (\text{MSE}_2)$  behaves in the limit  of $N\rightarrow\infty$. In particular, we find that for the case of the SGLD, the computational cost in order for  $\EE_X (\text{MSE}_2) \rightarrow 0$ is reduced by a factor of $N$ when compared to the Euler method. A similar analysis for the expected relative error in estimating the posterior variance (ERE) 
\begin{equation}
\text{ERE}=\EEnoise \frac{\frac{1}{K}\sum_{i=0}^{K-1}\theta_{i}^{2}-\left(\frac{1}{K}\sum_{i=0}^{K-1}\theta_{i}\right)^{2}}{\sigma_{p}^{2}}-1.\label{eq:toyEREintro}
\end{equation}
reveals  that under the constraint $\EE_X (\text{ERE})\rightarrow 0$ the Euler method and the SGLD have the same computational cost on the algebraic scale in $N$.

\subsection{Minimising Computational Effort for Constrained $\text{MSE}_2$} \label{sub:comp_effort}

In Section \ref{sec:toyMSE} we compared the Euler
method, the SGLD and the mSGLD for the same choice of $r=\frac{h}{A}.$
In the following we numerically minimise the computational effort with
respect to the condition $\text{MSE}_2\leq \epsilon^2$. We assume that the computational
cost is proportional to $M\cdot n$ which leads to the problem of solving
\begin{eqnarray}
\min_{F} &  & M\cdot n\label{eq:optimisation}\\
\text{subject to} &  & \text{MSE}_2(r,M,n)\leq \epsilon^2\nonumber \\
\text{w.r.t.} &  & r<1,M,n.\nonumber 
\end{eqnarray}
Even though we have analytic expressions for the $\text{MSE}_2$, the solution
to the optimisation problem does not have a closed form. To conclude our analysis, 
we illustrate the numerical solution to this problem for $N=1000$ for
the Euler method, the SGLD and the mSGLD.  The results,   depicted
in Figure \ref{fig:MSEoptimised}, can be summarised as follows: 
\begin{enumerate}
\item as $\epsilon$ becomes smaller, the gain of the SGLD over the Euler method
in terms of computational effort decreases (due to the fact that
$n$ increases);
\item as $\epsilon$ becomes smaller, the mSGLD gains efficiency over the SGLD (the reason being that $n$ seems to asymptote as $\epsilon$ decreases).
\end{enumerate}
\begin{figure}
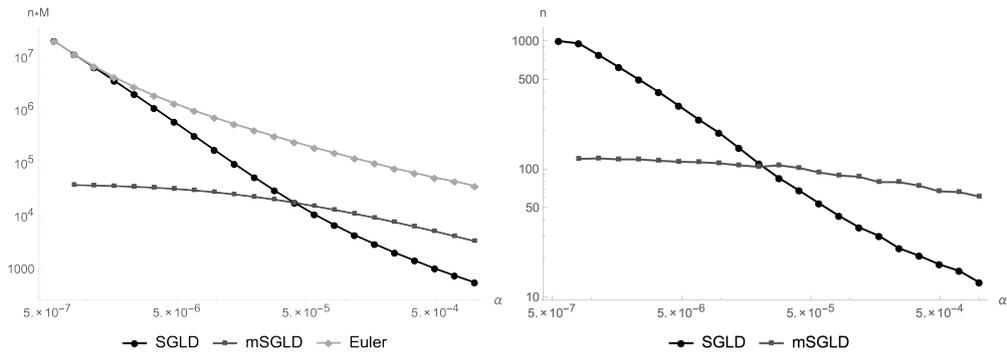
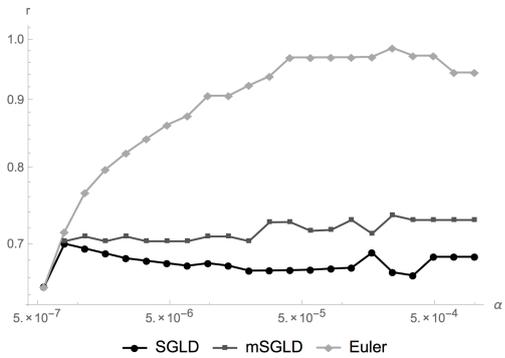

\subfloat[Minimised computational cost for the Euler, the SGLD and the mSGLD algorithm]{\includegraphics[width=0.42\textwidth]{optimizedWorkForMSEwork}
}\subfloat[Subset size $n$ for minimised computational cost]{\includegraphics[width=0.42\textwidth]{optimizedWorkForMSEsubset}
}
\newline
\subfloat[Step size $r=\frac{h}{A}$ for minimised computational cost]{\includegraphics[width=0.42\textwidth]{optimizedWorkForMSEstep.png}
}\caption{\label{fig:MSEoptimised}Minimisition of computational cost $\propto M\cdot n$
subject to $MSE\le \epsilon^2$}
\end{figure}

The upper bound obtained in Equation (\ref{eq:vairance}) suggests a scaling of $M\sim \epsilon^{-3}$ and $r \sim \epsilon$ to obtain an MSE of order $\epsilon^2$ with minimal computational effort. The numerical minimisation of $M$ with respect to $r$ and $M$ subject to the condition $\text{MSE}(r,M)\leq \epsilon^2$ confirms this scaling  empirically, see Figure \ref{fig:euleroptim}.

\begin{figure}
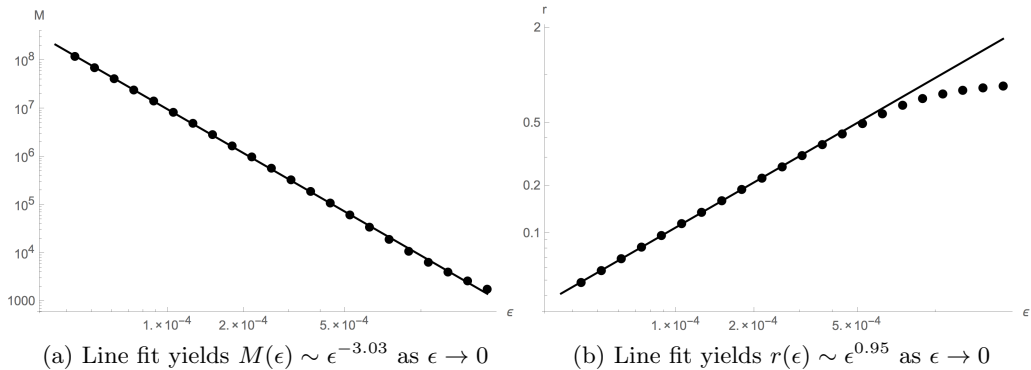

\subfloat[Line fit yields $M(\epsilon) \sim \epsilon^{-3.03}$ as $\epsilon \rightarrow 0$]{\includegraphics[width=0.43\textwidth]{EulerConstraintMSE-M.png}

} \subfloat[Line fit yields $r(\epsilon) \sim \epsilon^{0.95}$ as $\epsilon \rightarrow 0$]{\includegraphics[width=0.43\textwidth]{EulerConstraintMSE-r.png}
}

\caption{\label{fig:euleroptim} Scaling of $r(\epsilon)$ and $M(\epsilon)$ for minimal computational cost subject to the $\text{MSE}_2(r(\epsilon),M(\epsilon))\leq \epsilon^2$}
\end{figure}

\subsection{The $\text{MSE}_2$  for fixed and increasing $N$\label{sub:toyFixedAndIncreasingN}}

We now consider the behaviour for growing data size $N$ where a new data set $X$ is generated in each instance. Figure \ref{fig:MSEGrowN} depicts the $\text{MSE}_2$ for $N=10^{i}$
with $i=1,\dots,4$ for the subset choices $n=N^{0.1},N^{0.5}$ and
$N^{0.9}$; each compared to the Euler method corresponding to $n=N$. In this plot we notice that the SGLD outperforms the mSGLD for $n=N^{0.1}$
and $n=N^{0.5}$.  The behaviour in Figure \ref{fig:MSEGrowN} suggests that the mSGLD has a larger bias than the SGLD which seems to contradict the findings of Section \ref{sec:GaussianModel} and \ref{sec:Weak-Convergence-Analysis}. Previously, we have just considered the asymptotic of  $h\rightarrow0$. In contrast, we scale both  $h=r \frac{1}{A(N)}$ and $n=N^p$ in terms of $N$ in Figure \ref{fig:MSEGrowN}. In the following we investigate this relationship further by using the explicit formula for the asymptotic bias which has been made available in Section \ref{sec:GaussianModel} .

  For simplicity we consider sampling without replacement, using the expression in Equation \eqref{eq:VarBToy}.  Similar conclusions hold for sampling with replacement.
First we consider the mSGLD.  From Equation \eqref{eq:asympVarmSGLD} and using the parameterisation $h=r/A$ of the step size, the excess asymptotic bias becomes
\[
h^{2}\frac{\text{Var}(B)^{2}}{4(2A-A^{2}h)}
=\frac{(r/A)^{2}\left(\frac{N-n}{n}\right)^{2}N^{2}\Var(X)^{2}}{4(2-r)A}.
\]
Because $A\sim N$, we see that the excess bias stays bounded for large $N$ if and only if $n\gtrsim N^{\frac{1}{2}}$.  In contrast, the same consideration for the SGLD shows that the excess bias due to subsampling vanishes so
 long as $n\rightarrow\infty$ when $N\rightarrow \infty$.

\begin{figure}
\subfloat[The SGLD for $n=N^{\frac{1}{10}}$]{\includegraphics[width=0.43\textwidth]{MSEexp0-1}

}\hspace{0.02\textwidth}\subfloat[The mSGLD for $n=N^{\frac{1}{10}}$]{\includegraphics[width=0.43\textwidth]{mMSEexp0-1}
}
\subfloat[The SGLD for $n=N^{\frac{1}{2}}$]{\includegraphics[width=0.43\textwidth]{MSEexp0-5}
}\hspace{0.02\textwidth}\subfloat[The mSGLD for $n=N^{\frac{1}{2}}$]{\includegraphics[width=0.43\textwidth]{mMSEexp0-5}
}
\subfloat[The SGLD for $n=N^{\frac{9}{10}}$]{\includegraphics[width=0.43\textwidth]{MSEexp0-9}
}\hspace{0.02\textwidth}\subfloat[The mSGLD for $n=N^{\frac{9}{10}}$]{\includegraphics[width=0.43\textwidth]{mMSEexp0-9}
}

\begin{centering}
\includegraphics[width=0.4\textwidth]{legendNtoInf} 
\par\end{centering}

\protect\protect\caption{\label{fig:MSEGrowN} $\text{MSE}_2$ of the time average for the SGLD and
the mSGLD for the second moment of the posterior as $N\rightarrow\infty$. Notice that Figures (c) and (d) and (e) and (f) have the same scaling respectively. Moreover, figures (a) and (b) have separate scaling because of the instability of the mSGLD.}
\end{figure}


We can also identify the regime in which  the mSGLD has a smaller asymptotic bias than  the SGLD.  From Equations \eqref{eq:asympVarSGLD} and \eqref{eq:asympVarmSGLD} we see that this is the case when
\begin{equation}\label{eq:BiasMSGLDvsSGLD}
  \frac{h^{2}\operatorname{Var}^{2}(B)}{4(2A-A^{2}h)}\leq \frac{h\operatorname{Var}(B)}{2A-A^{2}h}.
\end{equation}
Using Equation \eqref{eq:VarBToy},  the above can be rearranged to
\[
1 \ge \frac{h}{4}  \frac{1}{16 \sigma^2_x} \frac{N(N-n)}{N}\text{Var}(X). 
\]
Let $c=\frac{n}{N}$ be the relative size of the subsampling.  Using $h=r/A$ where $A$ is given in Equation \eqref{eq:toyBdistribution}, we get
\[
c \ge \frac{ 2 r N \text{Var}\,X}{16\sigma^2_x\left(\frac{1}{\sigma^2_\theta}+\frac{N}{\sigma^2_x}\right)+2 r N \text{Var}\,X}
\]
which in the limit of large data sets $N \rightarrow \infty$ yields
\[
c \geq \frac{ 2 r  \text{Var}\,X}{16+2 r  \text{Var}\,X} > 0.
\]
In conclusion, for a fixed step size given by $r/A$, the mSGLD has a smaller bias than the SGLD if the above holds.  In other words, the subsampling size has to be linear in the size of the data set for a fixed choice of $r$.  

\subsection{Limit of the MSE and ERE for well-specified Data as $N\rightarrow\infty$\label{sub:toyLimitNinfty}}
In order to investigate the limit of $N\rightarrow\infty$, we need to specify the behaviour of the data as well. We study this in the well-specified case, in other words, we assume that the data is generated by the model  for $\theta^{\dagger}=1$. Previously, we have obtained an analytic expression for the expectation of the $\text{MSE}_2$ with respect to the realisation of the noise driving the algorithm. In contrast, we take expectations  with respect to the realisation of the data $X$ and the noise driving the algorithm in the following results. These results are formulated in analytic expressions\footnote{see Appendix \ref{sub:appToyAnalyticNtoInfty} for a sketch of the derivation for the $\text{MSE}_2$ (the derivation for ERE is similar)}  for the $\text{MSE}_2$ and the ERE depending only on $M,n,r$ and $N$.   We then choose $M,n$ and $r$ as functions of $N$ and study the limit $N \rightarrow \infty$ and how this affects the computational cost and the behaviour of the ERE and the $\text{MSE}_2$ as $N\rightarrow \infty$ for the different algorithms.

For the Euler method ($n=N$) we need to take $h<\frac{1}{A}\asymp\frac{1}{N}$
in order to make Equation \eqref{eq:GaussianRecurrence} stable. Moreover,
we need the number of steps $M$ to be of order $N$ to approximate
the diffusion to a time of order $\mathcal{O}(1)$. Because we evaluate
$N$ data points per step, this heuristic argument suggests that the complexity is of order $\mathcal{O}\left(N^{2}\right)$. Furthermore, we verify (using Mathematica\textsuperscript{\textregistered}), that for the Euler method  ($n=N$)
\begin{equation} \label{eq:Euler_beh}
\lim_{N\rightarrow\infty}\EE_X \text{MSE}  =0, \quad 
\lim_{N\rightarrow\infty}\EE_X \text{ERE}  =0
\end{equation}
for the choices $M=N^{1+2\epsilon}$ and $r=N^{-\epsilon}$ for any
$\epsilon>0$. The computational cost for fixed $N$ is $M\cdot n=N^{2+2\epsilon}$.  Thus, this confirms the heuristics we used for the Euler method in Section
\ref{sub:toyFixedAndIncreasingN}.

A natural next question  to ask in terms of the SGLD is if one can have Equation \eqref{eq:Euler_beh} to hold  but for smaller computational complexity than the Euler method.  Using Mathematica\textsuperscript{\textregistered}, we obtain  the following theorem for the MSE
\begin{theorem} For any $\epsilon>0$  and the choices $h=N^{-1-\epsilon}$,
$M=N^{1+2\epsilon}$ and $n=1,$  the SGLD satisfies 
\[
\lim_{N\rightarrow\infty}\mathbb{E}_{{\theta_i}_i,X}\left(\frac{1}{M}\sum_{k=0}^{M-1}\theta_{j}^{2}-(\mu_{p}^{2}+\sigma_{p}^{2})\right)^{2}=0.
\]
\end{theorem}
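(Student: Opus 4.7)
The plan is to exploit the fact that on the one-dimensional linear Gaussian model, SGLD is the AR(1)-type recursion
\[
\theta_{k+1} = (1-Ah)\theta_k + B_k h + \sqrt{h}\,\xi_k,
\]
with $(B_k)$ i.i.d.\ stochastic (given $X$) and $(\xi_k)$ independent standard Gaussians, so every mixed moment $\mathbb{E}[\theta_{j_1}^{p_1}\cdots \theta_{j_\ell}^{p_\ell}]$ up to total order four admits an explicit closed form in $h$, $M$, $A$, $\mathbb{E}B$ and $\operatorname{Var}(B)$ (these are the formulas behind the Mathematica notebook referenced in Appendix~\ref{app:Toy}). Writing $\hat\phi_M = \frac{1}{M}\sum_{k=0}^{M-1}\theta_k^2$, I decompose
\[
\mathbb{E}_{\{\xi_i,\tau_i\},X}\bigl(\hat\phi_M - (\mu_p^2+\sigma_p^2)\bigr)^2
= \underbrace{\mathbb{E}_X\bigl(\mathbb{E}\hat\phi_M - (\mu_p^2+\sigma_p^2)\bigr)^2}_{(I)}
+ \underbrace{\mathbb{E}_X\operatorname{Var}\hat\phi_M}_{(II)},
\]
and show separately that $(I)\to 0$ and $(II)\to 0$ under the scaling $h=N^{-1-\epsilon}$, $M=N^{1+2\epsilon}$, $n=1$.

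For $(I)$, I combine the formula for $\mathbb{E}\theta_M^2$ obtained by iterating the law of total variance (as in Section~\ref{sub:toyBias}) with the initial-transient geometric factor $(1-Ah)^M$. With $r=Ah=N^{-\epsilon}$, the transient contribution is of order $(1-N^{-\epsilon})^{N^{1+2\epsilon}} \le \exp(-N^{1+\epsilon})$, which is negligible. The stationary contribution is the asymptotic bias $h\operatorname{Var}(B)/(2A-A^2h)$ from Equation~\eqref{eq:asympVarSGLD}; for $n=1$, Equation~\eqref{eq:VarBToy} gives $\operatorname{Var}(B)\asymp N(N-1)\operatorname{Var}(X)/(4\sigma_x^4)$, and $A\asymp N$, so this bias is of order $hN = N^{-\epsilon}$. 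Squaring and taking $\mathbb{E}_X$ (which controls the data-dependent $\operatorname{Var}(X)$ by a constant plus lower order terms under the well-specified model with $\theta^\dagger=1$) yields $(I) = \mathcal{O}(N^{-2\epsilon}) \to 0$.

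For $(II)$, I use $\operatorname{Var}\hat\phi_M = M^{-2}\sum_{j,k}\operatorname{Cov}(\theta_j^2,\theta_k^2)$ and control it via stationary covariances. Because the chain contracts at rate $1-Ah=1-r$ on the first moment and $(1-Ah)^2$ on higher moments, the autocorrelations of $\theta_k^2$ decay geometrically with rate $(1-r)^{|j-k|}$, giving an effective sample size of order $M\cdot r = M/\tau_{\text{corr}}$. Hence $\operatorname{Var}\hat\phi_M \lesssim \operatorname{Var}(\theta_\infty^2)/(Mr)$. A direct computation using $\mathbb{E}\theta_\infty^4$ and $\mathbb{E}\theta_\infty^2$ (obtained from the linear recurrence for higher moments, which again closes in the Gaussian setting) gives $\operatorname{Var}(\theta_\infty^2) = \mathcal{O}(1)$ in the worst case (driven by the subsampling variance contribution through $h^2 \operatorname{Var}(B) \asymp N^{-2\epsilon}$, times $\sigma_p^2 \asymp N^{-1}$, summed over four-moment contributions). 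Then $(II) \lesssim 1/(Mr) = N^{-1-\epsilon}\to 0$ after averaging over $X$.

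\textbf{Main obstacle.} The delicate step is controlling $(II)$ and in particular $\operatorname{Var}(\theta_\infty^2)$ uniformly in $N$. For $n=1$ the stochastic gradient variance grows as $N^2$, and this feeds into the fourth moment of $\theta_\infty$ through the iterated recursion; naive bounds blow up. The trick is to track the cancellations coming from $\mathbb{E}_X$ (under which $\mathbb{E}_X B_k$ and $\operatorname{Var}(B)$ are polynomial in $N$ with predictable leading constants) and from the prefactor $h$ in the update, which provides the required $N^{-(1+\epsilon)}$ damping at every occurrence of $B_k$. Together with the explicit Mathematica computation of $\mathbb{E}_X \text{MSE}_2(M,n,h,N)$ as a rational function of $N$, substituting the prescribed scaling yields the claimed limit; the scheme used for the MSE extends verbatim to the ERE statement mentioned immediately after.
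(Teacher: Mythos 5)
Your proposal is correct in substance but takes a genuinely different route from the paper. The paper offers no analytic argument at all for this theorem: its ``proof'' is to derive, in Appendix \ref{sub:appToyAnalyticNtoInfty}, exact closed-form expressions for $\mathbb{E}_{X,\theta}\,\theta_j^p$ ($p=1,\dots,4$) and for the moments of $B$ under the well-specified model, assemble $\mathbb{E}_X\,\text{MSE}_2$ as an explicit rational function of $M$, $n$, $r$ and $N$, substitute the prescribed scaling, and evaluate the limit symbolically in Mathematica. You instead give a bias--variance decomposition with order-of-magnitude estimates: the squared bias is controlled by the stationary excess-variance formula \eqref{eq:asympVarSGLD}, which with $n=1$, $\operatorname{Var}(B)\asymp N^2$ and $A\asymp N$ gives a bias of order $hN=N^{-\epsilon}$, and the Monte Carlo variance is controlled by geometric decay of the autocorrelations of $\theta_k^2$ at rate $1-r$, giving $\operatorname{Var}\hat\phi_M\lesssim 1/(Mr)=N^{-1-\epsilon}$. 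These orders are right and the argument is sound; it has the advantage of explaining \emph{why} the scaling works and of generalising beyond the exactly solvable model, whereas the paper's computation is rigorous-by-symbolic-algebra but opaque. Two small points: (i) your transient bound $(1-r)^M\le\exp(-N^{1+\epsilon})$ only covers the last iterate; the time average retains a transient of order $1/(Mr)=N^{-1-\epsilon}$, which still vanishes but should be stated as such; (ii) your closing appeal to ``the explicit Mathematica computation'' is unnecessary if the asymptotic estimates in your first two paragraphs are carried out carefully (in particular a clean verification that $\sup_k\mathbb{E}\,\theta_k^4=O(1)$ under this scaling, which follows from the fourth-moment recursion since $h^4\mathbb{E}B^4\asymp N^{-4\epsilon}$ and $\mu_p=O(1)$) --- keeping it makes the proposal partially coincide with the paper's computational route rather than replacing it.
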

This constitutes a substantial gain compared to the Euler method because
it reduces the computational complexity in the data size $N$ from being almost quadratic to almost
linear.

We now draw our attention to the expected relative error in estimating the posterior variance, abbreviated by
\begin{equation}
\text{ERE}:=\EEnoise \frac{\frac{1}{K}\sum_{i=0}^{K-1}\theta_{i}^{2}-\left(\frac{1}{K}\sum_{i=0}^{K-1}\theta_{i}\right)^{2}}{\sigma_{p}^{2}}-1.\label{eq:toyEREintro}
\end{equation}
Because the posterior variance goes to zero as $N\Rightarrow\infty$, it is conceivable that it requires more computational effort to ensure that $\lim_{N\rightarrow\infty} \EE_X \text{ERE}=0$. In order to illustrate the behaviour of the ERE, we first consider the behaviour for a fixed data set and repeat the experiment of Figure \ref{fig:MSEGrowN} in Figure \ref{fig:EREGrowN}. The latter demonstrates that the asymptotic bias for the choice $n=N^{\frac{1}{2}}$ (the asymptotes for the grey lines) have an increasing value in $N$.  We used $h=\frac{1}{20A}\sim\frac{1}{N}$ which decreases with $N$. However, we show below that is requirement cancels exactly  the gain from $n\ll N$ at least on the algebraic scale in $N$.

\begin{figure}
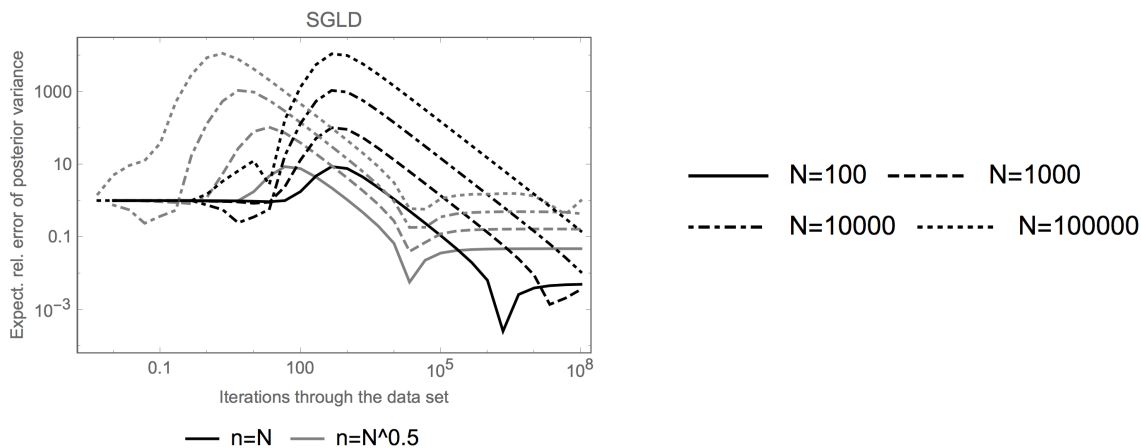

\begin{minipage}[t][1\totalheight][c]{0.5\columnwidth}%
\includegraphics[width=1\textwidth]{Relexp0-5}
\end{minipage} \hspace{1cm}\begin{minipage}[t][0.5\totalheight][c]{0.5\columnwidth}
\vspace{3cm}
\begin{centering}
\includegraphics[width=0.8\textwidth]{legendNtoInf} 
\end{centering}
\end{minipage}

\protect\protect\caption{\label{fig:EREGrowN}Expected relative error of the estimate of the variance of the posterior based on the SGLD.}
\end{figure}

In particular, we now choose $r=N^{-\alpha}$, $n=N^{\beta}$ and $M=N^{\gamma}$. Hence
the computational cost is $N^{\beta+\gamma}$. The step size $h=\frac{r}{A}$
satisfies $h\sim N^{-1-\alpha}$ because $A\sim N$. Since the algorithm
performs $M=N^{\gamma}$ steps, we expect it to approximate the diffusion
on the time interval $h\cdot M=N^{-1-\alpha+\gamma}$. 
  Therefore it is
reasonable to require that $\gamma>1+\alpha$. Under this assumption and with the help of 
Mathematica\textsuperscript{\textregistered}, we reduced the limit above to 

\begin{align*}
\lim_{N\rightarrow\infty} \EE_X \text{ERE} & =\lim_{N\rightarrow\infty}\left(\frac{2\cdot N^{-\alpha-\beta+3}}{(N+1)^{2}\left(N^{-\alpha}-2\right)^{2}}-\frac{N^{-2\alpha-\beta+3}}{(N+1)^{2}\left(N^{-\alpha}-2\right)^{2}}\right)\\
 & =\begin{cases}
0 & \text{if }\alpha+\beta>1\\
\infty & \text{if }\alpha+\beta<1.
\end{cases}
\end{align*}
Thus for $\lim_{N\rightarrow\infty} \EE_X \text{ERE}=0$  it is necessary that $\alpha+\beta\ge1$.
 This condition  in turn implies that the computational complexity satisfies 
\[
\underbrace{N^{\gamma}}_{\text{steps}}\times\underbrace{N^{\beta}}_{\text{cost per step}}=N^{1+\alpha}N^{\beta}=N^{1+\alpha+\beta}\gtrsim N^{2}.
\]
Thus, there is no computational gain for the ERE in the limit $N\rightarrow\infty$
on the algebraic scale in $N.$ We note that picking $\theta^\star=\frac{1}{N}$ instead of $\theta^\star=1$ does not change the results. Thus, a closer initialisation does not change the result for the ERE.

\section{Logistic Regression}
\label{sec:6}

\global\long\def\logit{\sigma}

In the following we present numerical simulations for a Bayesian logistic
regression model. The data items are given by covariates $x_i\in \mathbb{R}^d$  that are labeled by $y_i\in\{-1,1\}$.We assume the data $y_{i}\in\{-1,1\}$ is modelled
by 
\begin{equation}
p(y_{i}\vert x_{i},\beta)=\logit(y_{i}\beta^{t}x_{i})\label{eq.logistic}
\end{equation}
where $\logit(z)=\frac{1}{1+\exp(-z)}\in[0,1]$. The model posses the assumption that $y_i$ depends on $x_i$ through the linear relationship $\beta^{t}x_{i}$. Nevertheless, logistic regression is commonly used after a preprocessing has taken place and is therefore used here for numerical illustration. 

We put a Gaussian prior $\mathcal{N}(0,C_{0})$
on $\beta$, for simplicity we use $C_{0}=I$ subsequently. By Bayes'
rule the posterior $\pi$ satisfies 
\[
\pi(\beta)\propto\exp\left(-\frac{1}{2}\norm\beta_{C_{0}}^{2}\right)\prod_{i=1}^{N}\logit(y_{i}\beta^{T}x_{i}).
\]
 We consider $d=3$ and $N=1000$ data points and choose the covariate
to be 
\[
x=\left(\begin{array}{ccc}
x_{1,1} & x_{1,2} & 1\\
x_{2,1} & x_{2,2} & 1\\
\vdots & \vdots & \vdots\\
x_{1000,1} & x_{1000,2} & 1
\end{array}\right)
\]

for a fixed sample of $x_{i,j}\overset{\text{i.i.d.}}{\sim}\mathcal{N}\left(0,1\right)$
for $i=1,\dots1000$ and $j=1,2$. We use a long run of the Random-Walk-Metropolis algorithm to estimate
the posterior mean.

On that basis we estimate the MSE of the SGLD
based mean estimate using $100$ runs of the algorithm with step size
$h=0.002$ for various subset sizes. Figure \ref{fig:logisticRegression}
depicts the MSE as function of the iterations and effective iterations
through the data set. 
Notice that for this example the variance of the stochastic gradient ${\hat{f}(\theta)}$ depends on both $\theta$ and all the data items. For this reason we replace ${\text{Var }\hat{f}(\theta)}$ by an estimate ${\widehat{\text{Var }}\hat{f}(\theta)}$, see also Remark \ref{rem:varMSGLD}. 
We note that the mSGLD
is superior for $n=150$, inferior for $n=50$ and for $n=10$ the
MSE of the mSGLD does not drop below $1$.

\begin{figure}
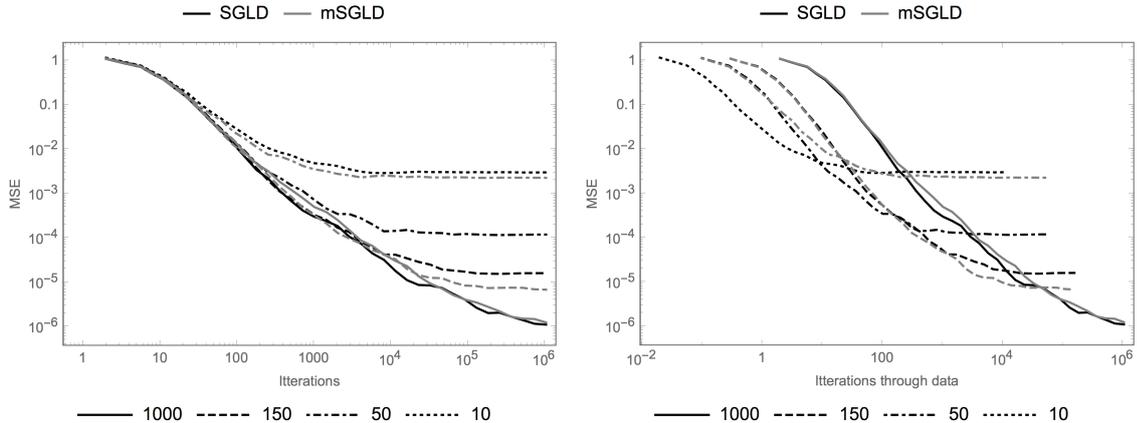

\includegraphics[width=0.47\textwidth]{logisticItteration}\hspace{0.02\textwidth}\includegraphics[width=0.47\textwidth]{logisticItterationData}

\protect\protect\caption{\label{fig:logisticRegression}Expected MSE of time average for the
SGLD and the mSGLD for the mean of the posterior}
\end{figure}

\section{Conclusion}
\label{sec:7}
 This article presents the mathematical foundations that are necessary for posterior sampling for stochastic gradient methods with fixed step sizes. We derived an error expansion of the asymptotic bias in terms of powers of the step size and identified how the constant in the leading order term depends on the unbiased estimator of the gradient. We construct a modified SGLD to match  the Euler method in this asymptotic expansion. These asymptotic results are complemented by upper bounds on the bias and the MSE over a finite time horizon. Minimising the MSE with respect to the step size yields a decay of the error at the same rate as the decreasing step size SGLD, see Remark \ref{rem:cmpDecStep}. 
These theoretical findings are completed with extensive analytic investigations of a one dimensional toy model that allows the derivation of analytic expressions for the sample average and its moments. Finally, this yields an exact quantification of expected errors. The results of this investigation can be summarised as follows: \begin{itemize}
\item  In the high accuracy regime the SGLD deteriorates to the Euler method while the mSGLD prevails.
\item For small data batches the bias of the mSGLD is larger than for the SGLD.
\item  In the limit as the number of data items goes to infinity the SGLD reduces  computational complexity of estimating the second moment with vanishing MSE by one power of the number of data items. 
\end{itemize}
This recommends the construction of new and a study of existing modifications of the SGLD such as the Stochastic Gradient Hamiltonian Monte Carlo \cite{Chen2014} and the Stochastic Gradient Thermostat Monte Carlo \cite{HMCT14} algorithms.

\subsection*{Acknowledgement}
SJV and YWT  acknowledge EPSRC for research funding through grant  EP/K009850/1 and EP/K009362/1.
The authors thank Rémi Bardenet for fruitful discussions.

\bibliographystyle{apalike}
\bibliography{abd_biblio,complete1,stochasticGradients}

\global\long\def\theequation{A.\arabic{equation}}
 \global\long\def\theremark{A.\arabic{remark}}
 \global\long\def\thelemma{A.\arabic{lemma}}
  \setcounter{equation}{0}
\section{Appendix A: Finite time Ergodic Average based Poisson Equation}\label{sec:finitetimeproof}

\begin{proof}[Proof of Theorem \ref{thm:FiniteTimeBiasVariance}]
Rearranging Equation \eqref{eq:sampleAvg} for $\frac{1}{K}\sum_{k=0}^{K-1}(\phi_{k}-\bar{\phi})$, 
the bias and the MSE can be controlled as follows:

\[
\EE\frac{1}{Kh}\left(\psi_{K}-\psi_{0}\right)\lesssim\frac{1}{T}\sup_{i}\EE V_{i}^{p_{\poisol,0}}\lesssim\frac{1}{T}.
\]
Because $A_{0}=\mathcal{L}$ for the SGLD it is left to bound to bound
$\EE\frac{1}{T}S_{i,K}$ for $i=0\dots4$. First we consider $i=1$
and use Equation (\ref{eq:ExpecDeltaAndDrift})

\begin{eqnarray*}
\frac{1}{T}\mathbb{E}S_{1,K} & \lesssim & \frac{1}{T}\EE\sum_{k=0}^{K-1}h^{2}V_{k}^{p_{\poisol,2}}\EE_{\tau_{k}}\left\Vert \hat{f}_{\step}\right\Vert ^{2}\\
 & \lesssim & \frac{1}{T}\sum_{k=0}^{K-1}h^{2}\sup_{i}\EE V_{i}^{p_{\poisol,2}+1} \lesssim  \frac{1}{T}h^{2}K\lesssim h.
\end{eqnarray*}
This procedure will be used over and over again. It can be summarised
as follows: 
\begin{enumerate}
\item bounding the terms in the sum by a power of $V^{p}$, using Equation
(\ref{eq:ExpecDeltaAndDrift}) and the assumption on derivates of
$\psi$; 
\item then derive the bound using $\sup_{i}\EE V_{i}^{p}<\infty$. 
\end{enumerate}
For $i=2$ we additionally use that Equation (\ref{eq:lineV}) implies
\[
\int_{0}^{1}s^{3}\deriv{4}{\poisol}\left(s\theta_{k}+(1-s)\theta_{k+1}\right)\left(\Delta_{\step+1},\Delta_{\step+1},\Delta_{\step+1},\Delta_{\step+1}\right)ds\lesssim\left(V_{k}^{p_{\poisol,4}}+V_{k+1}^{p_{\poisol,4}}\right)\left\Vert \Delta_{\step}\right\Vert ^{4}.
\]
which allows us to follow the general procedure 
\begin{eqnarray*}
\frac{1}{T}\mathbb{E}S_{2,K} & \lesssim & \frac{1}{T}\EE\sum_{k=0}^{K-1}R_{k+1}\\
 & \lesssim & \frac{1}{T}\EE\sum_{k=0}^{K-1}h^{2}\left(V_{k}^{p_{\poisol,4}}+V_{k+1}^{p_{\poisol,4}}\right)\EE_{\tau}\left\Vert \Delta_{k+1}\right\Vert ^{4}\\
 & \lesssim & \frac{1}{T}\EE\sum_{k=0}^{K-1}h^{2}\sup_{i}\EE V^{p_{\poisol,4}+2}(\theta_{i})  \lesssim  \frac{1}{T}h^{2}\lesssim h.
\end{eqnarray*}
We apply the general procedure to $\mathbb{E}S_{0,K}$

\begin{eqnarray*}
\frac{1}{T}\mathbb{E}S_{0,K} & \lesssim & \frac{1}{T}\EE\sum_{k=0}^{K-1}h^{2}V_{k}^{p_{\poisol,3}}\EE_{\tau}\left\Vert g_{k}\eta_{\step+1}\right\Vert ^{2}\left\Vert \hat{f}_{k}\right\Vert \\
 & \lesssim & \frac{1}{T}\EE\sum_{k=0}^{K-1}h^{2}\sup_{i}\EE V^{p_{\poisol,3}+\frac{1}{2}}(\theta_{i})\\
 & \lesssim & \frac{1}{T}h^{2}\lesssim h.\\
\frac{1}{T}\mathbb{E}\tilde{S}_{0,K} & \lesssim & \frac{1}{T}\EE\sum_{k=0}^{K-1}h^{3}V_{k}^{p_{\poisol,3}}\EE_{\tau}\left\Vert \hat{f}_{k}\right\Vert ^{3}\\
 & \lesssim & \frac{1}{T}\sum_{k=0}^{K-1}h^{3}\sup_{i}\EE V^{p_{\poisol,3}+\frac{3}{2}}(\theta_{l}) \lesssim  \frac{1}{T}Kh^{3}\lesssim h^{2}.
\end{eqnarray*}
Thus, we have established the bound on the bias given by Equation
(\ref{eq:bias}).

In order to establish the bound one the MSE in Equation (\ref{eq:vairance})
we note that Equation \eqref{eq:sampleAvg} yields

\begin{eqnarray*}
\EE\left(\frac{1}{K}\sum_{k=0}^{K-1}\left(\phi-\bar{\phi}\right)\right)^{2} & \lesssim & \EE\frac{\left(\psi_{K}-\psi\right)^{2}}{T^{2}}\\
 &  & +\frac{1}{T^{2}}\sum_{i=0}^{2}\EE S_{i,K}^{2}+\frac{1}{T^{2}}\sum_{i=0}^{2}\EE M_{i,K}^{2}
\end{eqnarray*}
First we note that 
\[
\EE\frac{\left(\psi_{K}-\psi\right)^{2}}{T^{2}}\lesssim\frac{1}{T^{2}}\sup_{i}\EE V_{i}^{2p_{\poisol,0}}\lesssim\frac{1}{T^{2}}.
\]
The $S_{i,K}^{2}$ terms can be bound in similar way as above with
the additional use of Cauchy-Schwartz inequality to break the correlation
between $V_{i}^{p}$ and $V_{j}^{p}$.

\begin{eqnarray*}
\frac{1}{T^{2}}\mathbb{E}S_{1,K}^{2} & \lesssim & \frac{1}{T^{2}}\EE\sum_{i,j=0}^{K-1}h^{4}\hess \poisol_{i}\left[\hat{f}_{i},\hat{f}_{i}\right]\hess \poisol_{j}\left[\hat{f}_{j},\hat{f}_{j}\right]\\
 & \lesssim & \frac{1}{T^{2}}\EE\sum_{i,j=0}^{K-1}h^{4}\left\Vert \hess \poisol_{i}\right\Vert \EE_{\tau}\left\Vert \hat{f}_{i}\right\Vert ^{2}\left\Vert \hess \poisol_{j}\right\Vert \EE_{\tau}\left\Vert \hat{f}_{j}\right\Vert ^{2}\\
 & \lesssim & \frac{1}{T^{2}}\sum_{i,j=0}^{K-1}h^{4}\EE V_{i}^{p_{\poisol,2}+1}V_{j}^{p_{\poisol,2}+1}\\
 & \lesssim & \frac{1}{T^{2}}\sum_{i,j=0}^{K-1}h^{4}\left(\EE V_{i}^{2p_{\poisol,2}+2}\right)^{\frac{1}{2}}\left(\EE V_{j}^{2p_{\poisol,2}+2}\right)^{\frac{1}{2}}\\
 & \lesssim & \frac{1}{T^{2}}\sum_{i,j=0}^{K-1}h^{4}\left(\sup_{i}\EE V_{i}^{2p_{\poisol,2}+2}\right) \lesssim  \frac{K^{2}h^{4}}{T^{2}}\lesssim h^{2}
\end{eqnarray*}
Similarly, we bound

\begin{eqnarray*}
\frac{1}{T^{2}}\mathbb{E}S_{2,K}^{2} & \lesssim & \frac{1}{T^{2}}\EE\sum_{i,j=0}^{K-1}R_{i+1}R_{j+1}\\
 & \lesssim & \frac{1}{T^{2}}\EE\sum_{i,j=0}^{K-1}\left(V_{i}^{p_{\poisol,4}}+V_{i+1}^{p_{\poisol,4}}\right)\left(V_{j}^{p_{\poisol,4}}+V_{j+1}^{p_{\poisol,4}}\right)\EE_{\tau}\left\Vert \Delta_{i+1}\right\Vert ^{4}\EE_{\tau}\left\Vert \Delta_{j+1}\right\Vert ^{4}\\
 & \lesssim & \frac{h^{4}}{T^{2}}\EE\sum_{i,j=0}^{K-1}\left(V_{i}^{p_{\poisol,4}}+V_{i+1}^{p_{\poisol,4}}\right)\left(V_{j}^{p_{\poisol,4}}+V_{j+1}^{p_{\poisol,4}}\right)V_{i}^{2}V_{j}^{2}\\
 & \lesssim & \frac{h^{4}}{T^{2}}\EE\sum_{i,j=0}^{K-1}\left(V_{i}^{p_{\poisol,4}}+V_{i+1}^{p_{\poisol,4}}\right)\left(V_{j}^{p_{\poisol,4}}+V_{j+1}^{p_{\poisol,4}}\right)V_{i}^{2}V_{j}^{2}\\
 & \lesssim & \frac{h^{4}}{T^{2}}\sum_{i,j=0}^{K-1}\left(\sup_{i}\EE V_{i}^{2p_{\poisol,4}+4}\right)  \lesssim  \frac{K^{2}h^{4}}{T^{2}}\lesssim h^{2}.
\end{eqnarray*}

Similar bounds can also be obtained for $\frac{1}{T^{2}}\mathbb{E}S_{0,K}^{2}$
and $\frac{1}{T^{2}}\mathbb{E}\tilde{S}_{0,K}^{2}$ 
\begin{eqnarray*}
\frac{1}{T^{2}}\EE S_{0,K}^{2} & \lesssim & \frac{h^{4}}{T^{2}}\sum_{i,j=0}^{K-1}\EE V_{i}^{p_{\poisol,3}}\left\Vert \hat{f}_{i}\right\Vert V_{j}^{p_{\poisol,3}}\left\Vert \hat{f}_{j}\right\Vert \\
 & \lesssim & \frac{h^{4}K^{2}}{T^{2}}\sup_{i}\EE V_{i}^{2p_{\poisol,3}+1}\lesssim h^{2}\\
\frac{1}{T^{2}}\mathbb{E}\tilde{S}_{0,K}^{2} & \lesssim & \frac{h{}^{6}}{T^{2}}\sum_{i,j=0}^{K-1}\EE V_{i}^{p_{\poisol,3}}\left\Vert \hat{f}_{i}\right\Vert ^{3}V_{j}^{p_{\poisol,3}}\left\Vert \hat{f}_{j}\right\Vert ^{3}\\
 & \lesssim & \frac{h^{6}K^{2}}{T^{2}}\sup_{i}\EE V_{i}^{2p_{\poisol,3}+2}\lesssim h^{4}
\end{eqnarray*}
For Martingale terms the cross terms vanish which allows us to obtain
the following bounds 
\begin{eqnarray*}
\frac{1}{T^{2}}\EE M_{1,K}^{2} & = & \frac{1}{T^{2}}\sum_{i=0}^{K-1}\left(\EE D_{lm}^{2}\poisol_{i}\left(g_{i}^{l,a}\eta_{i+1}^{a}g_{i}^{m,b}\eta_{i+1}^{b}-g_{i}^{k,l}g_{i}^{k,m}\right)\right)^{2}\\
 & \lesssim & \frac{1}{T^{2}}\sum_{i=0}^{K-1}\sup_{i}\EE V_{i}^{2p_{\poisol,2}}.
\end{eqnarray*}
The following term is the crucial Martingale term as it yields the
$\mathcal{O}\left(\frac{1}{T}\right)$ contribution

\begin{eqnarray*}
\frac{1}{T^{2}}\EE M_{2,K}^{2} & \lesssim & \frac{1}{T^{2}}h\sum_{k=0}^{K-1}\EE\left\Vert \derivf \poisol_{\step}\right\Vert ^{2}\left\Vert g_{\step}\xi_{\step+1}\right\Vert ^{2}\\
 & \lesssim & \frac{1}{T^{2}}h\sum_{k=0}^{K-1}\EE V_{k}^{2p_{\poisol,1}}\lesssim\frac{1}{T}.
\end{eqnarray*}
Similarly, we estimate 
\begin{eqnarray*}
\frac{1}{T^{2}}\EE M_{3,K}^{2} & \lesssim & \frac{1}{T^{2}}h^{3}\sum_{k=0}^{K-1}\EE\left\Vert \hess\psi_{\step}\right\Vert ^{2}\left\Vert \hat{f}_{\step}\right\Vert ^{2}\left\Vert g_{\step}\xi_{\step+1}\right\Vert ^{2}\\
 & \lesssim & \frac{1}{T^{2}}h^{3}\sum_{k=0}^{K-1}\EE V_{k}^{2p_{\poisol,2}+1}\lesssim\frac{h^{2}}{T}.
\end{eqnarray*}
The terms $\frac{1}{T^{2}}\EE M_{0,K}^{2}$ and $\frac{1}{T^{2}}\EE\tilde{M}_{0,K}^{2}$
can be bounded in the same way 
\begin{eqnarray*}
\frac{1}{T^{2}}\EE M_{0,K}^{2} & \lesssim & \frac{h^{3}}{T^{2}}\sum_{k=0}^{K-1}\EE V_{k}^{2p_{\poisol,3}}\lesssim\frac{h^{3}K}{T^{2}}\leq\frac{h^{2}}{T}\\
\frac{1}{T^{2}}\EE\tilde{M}_{0,K}^{2} & \lesssim & \frac{h^{5}}{T^{2}}\sum_{k=0}^{K-1}\EE V_{k}^{2p_{\poisol,3}+2}\lesssim\frac{h^{5}K}{T^{2}}\leq\frac{h^{4}}{T}
\end{eqnarray*}
The additional part for the SGLD is the term corresponding to the
Martingale $M_{4,K}$ 
\begin{eqnarray*}
\frac{1}{T^{2}}M_{4,K}^{2} & \lesssim & \frac{1}{T^{2}}\EE h^{2}\sum_{k=0}^{K-1}\left(\derivf \poisol_{\step}(H_{k})\right)^{2}\\
 & \lesssim & \frac{1}{T^{2}}\EE h^{2}\sum_{k=0}^{K-1}V^{2p_{\poisol,1}}\EE_{\tau}\left\Vert H_{k}\right\Vert ^{2}\\
 & \lesssim & \frac{1}{T^{2}}h^{2}\sum_{k=0}^{K-1}\EE V^{2p_{\poisol,1}+1} \lesssim  \frac{h}{T}
\end{eqnarray*}
For all these calculations to go through need $\sup_{i}EV_{i}^{p^{\star}}$
to be bounded. Collecting the orders present, we see that 
\[
p^{\star}=\max\left\{ 2p_{\poisol,2}+2,2p_{\poisol,4}+4,2p_{\poisol,3}+1,2p_{\poisol,3}+3\right\} 
\]
is sufficient. \end{proof}

\subsection{Sufficient Conditions on $\pi$  Ensuring Finite Time Bounds on Bias and MSE  \label{sec:regPoisson}}
We formulate a sufficient condition on $\pi$ that ensure that Theorem \ref{thm:FiniteTimeBiasVariance} is applicable. This hinges on deriving a sufficient condition for Equations  \eqref{eq:PoissonDerivBounds},  \eqref{eq:LyapunovBddApriori}  and (\ref{eq:lineV}). The aim of this section is to establish and motivate the sufficient condition formulated in the following theorem.

\begin{theorem} Suppose the following condition holds
\begin{align}
\left\langle  \theta,\,\,\nabla\log\pi_0(\theta)\right\rangle\,\;& \leq\,-\alpha\, \left\Vert \theta \right \Vert^2+\beta\\
\left\langle  \theta,\,\,\nabla\log\pi(X_i \mid \theta)\right\rangle\,\;&\leq\,-\alpha\, \left\Vert \theta \right \Vert^2+\beta \text{ for }i=1,\dots,N.
\end{align}
then Theorem \ref{thm:FiniteTimeBiasVariance} is applicable for polynomially bounded and continuous $\phi$, that is
\begin{eqnarray*}
\text{Bias}\left(\hat{\phi}_{K}\right) & = & \left|\EE\hat{\phi}_{K}-\bar{\phi}\right| \leq C\left(h+\frac{1}{K h}\right)\label{eq:bias}\\
\EE\left(\hat{\phi}_{K}-\bar{\phi}\right)^{2} & \leq & C\left(h^{2}+\frac{1}{K h}\right)
\end{eqnarray*}
\end{theorem}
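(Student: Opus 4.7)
The strategy is to identify a Lyapunov function $V$ such that all three hypotheses of Theorem \ref{thm:FiniteTimeBiasVariance} can be verified from the two dissipativity estimates. The natural choice is $V(\theta) = 1 + \|\theta\|^2$ (or, more generally, $V(\theta) = (1+\|\theta\|^2)^q$ for a sufficiently large $q$ determined by the polynomial growth exponent of $\phi$ and by $p^\star$). Condition \eqref{eq:lineV} is immediate for such $V$ since $\|s\theta_1 + (1-s)\theta_2\|^2 \le 2\|\theta_1\|^2 + 2\|\theta_2\|^2$, and any positive power preserves this up to a constant.

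First I would combine the two hypothesised dissipativity estimates for $\nabla\log\pi_0$ and for each $\nabla\log\pi(X_i\mid\theta)$. Summing the likelihood inequalities over $i=1,\dots,N$ and adding the prior one shows that $f(\theta) = \tfrac12 \nabla\log\pi(\theta)$ satisfies $\langle \theta, f(\theta)\rangle \le -\alpha'\|\theta\|^2 + \beta'$; moreover, since the subsampled estimator $\hat f(\theta,\tau)$ is a convex combination of $\nabla\log\pi_0 + N\,\nabla\log\pi(X_{\tau_i}\mid\theta)$ terms each of which satisfies an analogous inequality (by hypothesis), the same dissipativity holds almost surely for $\hat f(\theta,\tau)$ with constants depending on $N$ but not on $\tau$. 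Combined with Assumption \ref{ass1} (bounded derivatives of $\nabla\log\pi$ of all orders), the bound $\|f\|^2 \lesssim V$ and $\EE_\tau \|H(\theta,\tau)\|^{2p} \lesssim V(\theta)^p$ in \eqref{eq:subsamplingCond} follow: $H$ has bounded derivatives of order zero in $\theta$ after the dissipativity has absorbed the linear part, so its moments are controlled by a polynomial in $\|\theta\|$.

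Next I would establish the a priori moment bound \eqref{eq:LyapunovBddApriori}. This is the standard Foster--Lyapunov step applied to the discrete chain $\theta_{k+1} = \theta_k + h\hat f(\theta_k,\tau_k) + \sqrt{h}\,\xi_k$. Using $\|\xi_k\|^2$ has bounded moments and the dissipativity of $\hat f$ derived above, a direct calculation of $\EE[V^p(\theta_{k+1})\mid\theta_k]$ gives, for $h$ sufficiently small,
\begin{equation*}
\EE[V^p(\theta_{k+1})\mid\theta_k] \le (1-\gamma h)\,V^p(\theta_k) + C\, h,
\end{equation*}
which iterates to $\sup_k \EE V^p(\theta_k) \le \max\{V^p(\theta_0),\, C/\gamma\} < \infty$ for all $p \le p^\star$. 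This argument is uniform in the subsample $\tau_k$, so the same bound holds under expectation over the $\tau_k$ as well.

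The remaining, and hardest, step is condition \eqref{eq:PoissonDerivBounds} on the derivatives of the solution $\psi$ to the Poisson equation $\mathcal{L}\psi = \phi - \bar\phi$ on $\mathbb{R}^d$. Here I would invoke the results of \cite{Veretennikov2001Poisson} (as already used in \cite{TehThierryVollmerSGLD2014}): under the dissipativity of $f$ and smoothness of $\nabla\log\pi$ (Assumption \ref{ass1}), the centred Poisson equation admits a classical solution $\psi \in C^4(\mathbb{R}^d)$ whose derivatives of order $k=0,\dots,4$ are polynomially bounded, $\|D^k\psi(\theta)\| \lesssim (1+\|\theta\|)^{s_k}$ for exponents $s_k$ depending on the polynomial growth of $\phi$. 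Choosing $V(\theta) = (1+\|\theta\|^2)^q$ with $q$ large enough to dominate all $s_k$ then yields \eqref{eq:PoissonDerivBounds} with explicit exponents $p_{\psi,k}$. The main obstacle is precisely matching the exponents $p_{\psi,k}$ with $p^\star$: one must check that the chain's moment bound \eqref{eq:LyapunovBddApriori} holds for the Lyapunov power $p^\star = \max\{2p_{\psi,2}+2, 2p_{\psi,4}+4, 2p_{\psi,3}+1, 2p_{\psi,3}+3\}$; this forces $\phi$ to be polynomially bounded of some explicit degree, but the drift argument above tolerates any such power provided $h$ is small. With all three hypotheses verified, Theorem \ref{thm:FiniteTimeBiasVariance} applies and yields the two displayed bounds.
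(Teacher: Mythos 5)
Your overall architecture matches the paper's: pick $V(\theta)=\|\theta\|^2+1$, verify \eqref{eq:lineV} trivially, get the a priori moment bound \eqref{eq:LyapunovBddApriori} from a Foster--Lyapunov drift estimate for the discrete chain (the paper outsources this to Lemma 5 of \cite{TehThierryVollmerSGLD2014} under Assumption \ref{assu:fromPaper1}, whereas you sketch the one-step drift inequality directly -- that part is fine and essentially equivalent), and then appeal to \cite{Veretennikov2001Poisson} for the regularity of the Poisson equation. The gap is in the last step. The result you invoke (Theorem \ref{thm:pardouxPoisson} as stated in the paper) controls only $|\psi|$ and $\|\nabla\psi\|$; it says nothing about $D^2\psi$, $D^3\psi$, $D^4\psi$, yet the hypothesis \eqref{eq:PoissonDerivBounds} of Theorem \ref{thm:FiniteTimeBiasVariance} requires polynomial growth bounds on derivatives up to order four (these are exactly the quantities entering the Taylor-expansion remainders $S_{i,K}$ and $M_{i,K}$). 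Asserting that Veretennikov delivers a $C^4$ solution with polynomially bounded derivatives of all orders up to four is assuming the hardest part of the proof.

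The missing idea is a bootstrap: differentiate the Poisson equation itself. Each partial derivative $\partial^\alpha\psi$ solves a new Poisson-type equation whose right-hand side involves lower-order derivatives of $\psi$ and derivatives of the drift $f$, as in \eqref{eq:PoiDeriv1}--\eqref{eq:PoiDeriv3}; for instance $\mathcal{L}\,\partial_i\psi = \partial_i\phi - \frac{1}{2}\nabla\psi\cdot\partial_i f$. One then applies the gradient bound of Theorem \ref{thm:pardouxPoisson} iteratively to each of these equations, tracking how the polynomial growth exponents $\beta_{\psi,i}$ accumulate in terms of the growth exponents $\beta_{f,i}$ of the derivatives of the drift; this is the content of the lemma closing the paper's argument and is what produces the explicit exponents $p_{\psi,2}=\beta_{f,1}/2$, $p_{\psi,3}=\beta_{f,1}\vee\beta_{f,2}/2$, etc., which in turn determine $p^\star$. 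Without this iteration your verification of \eqref{eq:PoissonDerivBounds} does not go through, and the matching of $p^\star$ against \eqref{eq:LyapunovBddApriori} that you correctly flag as the delicate point cannot even be set up, because the exponents $p_{\psi,k}$ are not yet defined.
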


First we appeal to a sufficient condition for Equation\eqref{eq:LyapunovBddApriori} before summarising a regularity results of \cite{Veretennikov2001Poisson} which allows us to establish Equation \eqref{eq:PoissonDerivBounds}.
We believe that the sufficient conditions above  can be weakened, but this requires improving the results of \cite{Veretennikov2001Poisson} which is out of the scope of this article.


The condition $\sup_{k}\EE V^{p}\left(\theta_{k}\right)<\infty$ (that is Equation (\ref{eq:LyapunovBddApriori})) 
is established for all $p\leq p^\star$ by Lemma 5 in \cite{TehThierryVollmerSGLD2014} if $p^\star$ satisfies the following assumption. \begin{assumption} \label{assu:fromPaper1}The
drift term $\theta\mapsto\frac{1}{2}\,\nabla\log\pi(\theta)$ is continuous.
The function $V:\RR^{d}\to[1,\infty)$  tends
to infinity as $\|\theta\|\to\infty$, is twice differentiable
with bounded second derivatives and  satisfies the following
conditions: 
\begin{enumerate}
\item {V is a Lyapunov function for the Langevin dynamics, i.e. there are constants $\alpha,\beta>0$
such that for every $\theta\in\RR^{d}$ we have 
\begin{equation}
\,\left\langle\nabla V(\theta),\,\frac{1}{2}\,\nabla\log\pi(\theta)\right\rangle\,\;\leq\,-\alpha\, V(\theta)+\beta.\label{eq.lyapunov.drift}
\end{equation}
} 
\item {The following bounds hold

\begin{itemize}
\item {There exists an exponent $p_{H}\geq2$ such that 
\begin{equation}
\EE[\,\|H(\theta,\UU)\|^{2p_{H}}\,]\lesssim V^{p_{H}}(\theta).\label{eq.bound.H}
\end{equation}
Moreover, this implies that $\EE[\,\|H(\theta,\UU)\|^{2p}\,]\lesssim V^{p}(\theta)$
for any exponent $0\leq p\leq p_{H}$. } 
\item {For every $\theta\in\RR^{d}$ we have 
\begin{equation}
\|\nabla V(\theta)\|^{2}+\|\nabla\log\pi(\theta)\|^{2}\;\lesssim\; V(\theta).\label{eq.lyapunov.size}
\end{equation}
} 
\end{itemize}

}

\end{enumerate}
\end{assumption} Notice that for $\hat{f}$ based on subsampling
we obtain that $p_{H}=\infty$ if 
\[
\norm{\nabla\log p\left(y\mid\theta\right)}^{2}\leq C(y)V.
\]
Notice that Assumption \ref{assu:fromPaper1} also implies 
\begin{equation}
\begin{aligned}\EE\norm{\theta_{k+1}-\theta_{k}}^{2p} & \leq V_{k}^{p}\\
\EE\norm{\hat{f}_{k}}^{2p} & \leq V_{k}^{p}
\end{aligned}
\label{eq:ExpecDeltaAndDrift}
\end{equation}
if for $p\leq p_{H}$. Equation (\ref{eq:lineV}) could now simply be formulated as an additional
assumption, however currently we need even stronger assumptions to
verify Equation (\ref{eq:PoissonDerivBounds}). 
Subsequently, we show how the results of \cite{Veretennikov2001Poisson} can be used to establish  Equation (\ref{eq:PoissonDerivBounds}) if  Equations (\ref{eq.lyapunov.drift}) and
(\ref{eq.lyapunov.size}) hold for  $V=\norm{\theta}^{2}+1$.

Theorem 1 and 2 of \cite{Veretennikov2001Poisson} characterise the
smoothness and growth of the solution to the Poisson equation associated
with Equation (\ref{eq:SDE}). This is important for our results because the key ingredient for the proof of Theorem \ref{thm:FiniteTimeBiasVariance} is \[\sup_{k}\EE\norm{\deriv{(i)}{\psi}\left(\theta_{k}\right)}<\infty \text{ for } k=1,\dots,4.\]
Because we have already established in Section \ref{sec:4} that 

$\sup_{i}\EE V^{p}\left(\theta_{i}\right)  <\infty$
it is left to verify 
\begin{align}
\left\Vert \deriv{k}{\psi}\right\Vert  & \lesssim V^{p_{\poisol,k}},\quad\text{ for }k=0,\dots,4\tag{\ref{eq:PoissonDerivBounds}}
\end{align}
\marginpar{here}
The assumptions needed to apply the Theorem \ref{thm:FiniteTimeBiasVariance}
results are 
\begin{eqnarray}
\left\langle f(\theta),\frac{\theta}{\norm{\theta}}\right\rangle  & \leq & -r\norm{\theta},\quad\norm{\theta}\ge M_{0}\label{eq:assPardouxDrift}\\
0<\lambda_{-} & \leq & \left\langle g\left(\theta\right)g^{\star}\left(\theta\right)\frac{\theta}{\norm{\theta}},\frac{\theta}{\norm{\theta}}\right\rangle \leq\lambda_{+}<\infty.\label{eq:assPardouxVolat}
\end{eqnarray}
This holds if Assumption \ref{assu:fromPaper1} is satisfied with
$V\left(\theta\right)=\norm{\theta}^{2}+1$. \begin{theorem} \label{thm:pardouxPoisson}\cite{Veretennikov2001Poisson} Let
$\bar{f}=0$
and Equations (\ref{eq:assPardouxDrift}) and (\ref{eq:assPardouxVolat})
are satisfied. Then there exists a solution $\psi\in W_{\text{loc}}^{2}$
to the Poisson equation 
\[
\mathcal{L}\psi=\phi-\bar{\phi}
\]

\begin{enumerate}
\item If there is a C such that 
\[
\left|\phi(\theta)\right|\leq C\left(1+\norm{\theta}\right)^{\beta}
\]
for some $\beta<0$, then $u$ is bounded. Moreover, 
\[
\sup_{\theta}\left|\psi(\theta)\right|\leq C\sup_{\theta}\left|f\right|\left(1+\norm{\theta}\right)^{-\beta}
\]
and 
\[
\norm{\nabla\psi}\leq C
\]

\item if there exists a constant $C$ and some $\beta>0$ such that 
\[
\left|\phi(\theta)\right|\leq C(1+\norm{\theta})^{\beta}
\]
then there exist a constant such that 
\[
\left|\psi\left(\theta\right)\right|\leq C^{\prime}\left(1+\norm{\theta}\right)^{\beta}.
\]
Finally there exists $C$ such that 
\[
\norm{\nabla\psi}\leq C\left(1+\left|\theta\right|^{\beta}\right).
\]

\end{enumerate}
\end{theorem}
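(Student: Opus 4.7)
The plan is to construct $\psi$ explicitly as the time-integrated centered semigroup
\[
\psi(\theta) \;=\; -\int_0^\infty \bigl(P_t\phi(\theta) - \bar{\phi}\bigr)\,dt, \qquad P_t\phi(\theta) := \mathbb{E}_\theta[\phi(\theta(t))],
\]
verify that this integral converges and satisfies $\mathcal{L}\psi = \phi - \bar{\phi}$ by differentiating under the integral and using the backward Kolmogorov equation, and then read off both size and derivative bounds from quantitative ergodicity of the underlying diffusion. The uniform ellipticity (\ref{eq:assPardouxVolat}) ensures $\mathcal{L}$ is non-degenerate, which, combined with standard Calderon--Zygmund / Schauder theory, will give the $W^{2,p}_{\mathrm{loc}}$ regularity. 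Thus the heart of the proof is obtaining sharp pointwise bounds on $P_t\phi(\theta) - \bar{\phi}$ and on $\nabla P_t\phi(\theta)$.

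For the size bounds, I would work with the polynomial Lyapunov functions $V_k(\theta) = (1 + \|\theta\|^2)^{k/2}$. A direct calculation using (\ref{eq:assPardouxDrift}) and (\ref{eq:assPardouxVolat}) yields a Foster--Lyapunov drift inequality $\mathcal{L}V_k \le -c_k V_k + d_k \mathbf{1}_{K}$ on a compact set $K$. Gronwall then gives the uniform moment bound $\mathbb{E}_\theta V_k(\theta(t)) \lesssim V_k(\theta)$ for all $t\ge 0$. Next, the ellipticity provides a minorization on compacts, and by Meyn--Tweedie theory this plus the drift inequality yields $V_k$-geometric ergodicity,
\[
\bigl|P_t\phi(\theta)-\bar{\phi}\bigr| \;\le\; C\,V_k(\theta)\,e^{-\lambda t}\,\|\phi\|_{V_k},
\]
where $\|\phi\|_{V_k} = \sup_\theta |\phi(\theta)|/V_k(\theta)$. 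Integrating in $t$ gives the two cases of the theorem: when $\beta<0$ the function $\phi$ is bounded, so a sharper bound $|P_t\phi(\theta)-\bar{\phi}|\le C\|\phi\|_\infty (1+\|\theta\|)^{-\beta} e^{-\lambda t}$ via a refined comparison (or equivalently exploiting the fact that $\phi$ vanishes at infinity) gives $|\psi(\theta)|\le C\|\phi\|_\infty (1+\|\theta\|)^{-\beta}$; when $\beta>0$ the polynomial bound $|\psi(\theta)|\le C'(1+\|\theta\|)^\beta$ follows directly by choosing $k=\beta$.

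For the gradient bounds, I would use the derivative flow (first variation) $\eta_t := \nabla_\theta \theta(t)$, which satisfies the linearized SDE driven by $\nabla f(\theta(t))$ and $\nabla g(\theta(t))$. The radial dissipativity in (\ref{eq:assPardouxDrift}) controls the symmetric part of $\nabla f$ in a directional sense, which after a standard quadratic-form computation gives $\mathbb{E}\|\eta_t\|^2 \le C e^{-2\lambda t}$ outside a compact set of initial conditions. For smooth $\phi$ we then have $\nabla P_t\phi(\theta) = \mathbb{E}_\theta[\nabla\phi(\theta(t))\eta_t]$, bounded by $C(1+\|\theta\|)^{\beta-1} e^{-\lambda t}$ (Case 2) or $Ce^{-\lambda t}$ (Case 1) after Cauchy--Schwarz. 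For general $\phi$ the same estimate is obtained via the Bismut--Elworthy--Li formula
\[
\nabla P_t\phi(\theta) \;=\; \frac{1}{t}\,\mathbb{E}_\theta\!\Bigl[\phi(\theta(t))\!\int_0^t (g^{-1}(\theta(s))\eta_s)^{\!*}\,dW_s\Bigr],
\]
which works because (\ref{eq:assPardouxVolat}) guarantees $g^{-1}$ is bounded. Integrating these estimates in $t$ (splitting at $t=1$ to absorb the $1/t$ singularity by using $\nabla\psi(\theta)=-\int_0^1 \nabla P_t\phi\,dt - \int_1^\infty \nabla P_t\phi\,dt$ with the short interval handled by local parabolic regularity) yields the asserted bounds $\|\nabla\psi\|\le C$ in Case 1 and $\|\nabla\psi\|\le C(1+\|\theta\|^\beta)$ in Case 2.

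The main obstacle is the gradient estimate in Case 2: merely having dissipativity in the radial direction does not by itself control $\mathbb{E}\|\eta_t\|^p$ for large $p$, because $\nabla f$ could in principle have growing non-symmetric parts. I would circumvent this by a synchronous-coupling argument---couple two copies of the diffusion with the same Brownian motion but different starts and show that the distance process contracts on average using (\ref{eq:assPardouxDrift}), thereby bounding $|P_t\phi(\theta)-P_t\phi(\theta')|$ directly without passing through the Jacobian. Together with the polynomial moment bounds above this gives the required growth control on $\nabla\psi$. The $W^{2,p}_{\mathrm{loc}}$ regularity then follows from rewriting $\mathcal{L}\psi = \phi-\bar\phi$ on a ball as a uniformly elliptic PDE with bounded right-hand side and applying interior $L^p$ estimates.
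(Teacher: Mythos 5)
This theorem is not proved in the paper at all: it is imported verbatim from \cite{Veretennikov2001Poisson}, so there is no in-paper argument to compare against. Your overall architecture --- defining $\psi(\theta)=-\int_0^\infty(P_t\phi(\theta)-\bar\phi)\,dt$, establishing convergence of the integral via a Foster--Lyapunov drift condition derived from (\ref{eq:assPardouxDrift}) together with the minorization supplied by the uniform ellipticity (\ref{eq:assPardouxVolat}), and reading off the growth of $\psi$ from weighted ergodicity estimates --- is essentially the strategy of the cited reference for the existence statement and the sup-norm bounds, and that part of your sketch is sound.

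The genuine gap is in the gradient estimates. Both routes you propose require information that the hypotheses do not supply. The derivative-flow/Bismut--Elworthy--Li route needs quantitative control of $\nabla f$ along trajectories, and you correctly flag this; but your proposed fix by synchronous coupling does not work either, because the contraction of the distance process between two synchronously coupled copies requires a one-sided Lipschitz (monotonicity) condition $\left\langle f(x)-f(y),\,x-y\right\rangle\leq -c\left\Vert x-y\right\Vert^2$. Equation (\ref{eq:assPardouxDrift}) constrains $f$ only in the radial direction from the origin at a single point; it says nothing about the difference $f(x)-f(y)$ for nearby $x,y$, so the claim that ``the distance process contracts on average using (\ref{eq:assPardouxDrift})'' is unjustified. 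The way this is actually handled in \cite{Veretennikov2001Poisson} is analytic rather than probabilistic: once the pointwise bound $\left|\psi(\theta)\right|\leq C'\left(1+\left\Vert\theta\right\Vert\right)^{\beta}$ is in hand, one views $\mathcal{L}\psi=\phi-\bar\phi$ as a uniformly elliptic equation on each unit ball $B(\theta_0,1)$, where the coefficients and the right-hand side are locally bounded, and applies interior $W^{2,p}$/Schauder estimates to deduce $\left\Vert\nabla\psi(\theta_0)\right\Vert\lesssim \sup_{B(\theta_0,1)}\left|\psi\right|+\sup_{B(\theta_0,1)}\left|\phi-\bar\phi\right|$, with a careful bookkeeping of how the local drift bound $\sup_{B(\theta_0,1)}\left\Vert f\right\Vert$ enters the constant so that no spurious extra power of $\left\Vert\theta_0\right\Vert$ appears. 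Your sketch invokes local regularity only to tame the $1/t$ singularity near $t=0$, not as the mechanism producing the gradient bound, so as written the derivative estimates in both cases of the theorem remain unproved.
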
 \begin{remark} We believe that assumption Theorem
\ref{thm:pardouxPoisson} can be weakened to be of the form of Equation
\eqref{eq.lyapunov.drift} but it is out of the scope of this article
to explore this direction. \end{remark}

In order to iterate Theorem \ref{thm:pardouxPoisson} we note that
the derivatives $\psi$ can be expressed as solution to Poisson equations
with different RHSs. 
\begin{eqnarray}
\mathcal{L}\psi & = & \phi-\bar{\phi}\label{eq:PoiDeriv0}\\
A\partial_{i}\psi & = & \partial_{i}\phi-\frac{1}{2}\nabla\psi\cdot\partial_{i}f\label{eq:PoiDeriv1}\\
A\partial_{ij}\psi & = & \partial_{ij}\phi-\frac{1}{2}\nabla\partial_{j}\psi\cdot\partial_{i}f-\frac{1}{2}\nabla\psi\cdot\partial_{ij}f-\frac{1}{2}\nabla\psi\cdot\partial_{j}f\label{eq:PoiDeriv2}\\
A\partial_{ijk}\psi & = & \partial_{ijk}\phi-\frac{1}{2}\nabla\partial_{jk}\psi\cdot\partial_{i}f-\frac{1}{2}\nabla\partial_{j}\psi\cdot\partial_{ik}f-\frac{1}{2}\nabla\partial_{k}\psi\cdot\partial_{ij}f-\frac{1}{2}\nabla\psi\cdot\partial_{ijk}f\label{eq:PoiDeriv3}\\
 &  & -\frac{1}{2}\nabla\partial_{k}\psi\cdot\partial_{jk}f-\frac{1}{2}\nabla\psi\cdot\partial_{jk}f-\frac{1}{2}\nabla\psi\cdot\partial_{k}f\nonumber 
\end{eqnarray}
We will denote by $\beta_{\psi,i}$ numbers that satisfy 
\begin{equation}
\sup_{\left|\alpha\right|=i}\norm{\partial^{\alpha}\psi}\lesssim\left(1+\left|\theta\right|^{\beta_{\psi,i}}\right)\label{eq:betanotation}
\end{equation}
where we used multi-index notation for derivatives. We use a similar
notation for the derivatives of $f$, that is $\beta_{f,i}$ and assume
that these bounds are a priori given.

Using Theorem \ref{thm:pardouxPoisson} we can obtain $p_{\psi,i}$
to satisfy Equation \eqref{eq:PoissonDerivBounds} in terms of the
$\beta$'s which we formulate as the following lemma.

\begin{lemma} Suppose that $\phi$ and its derivatives are bounded
and Assumption \ref{assu:fromPaper1} and Equations Equations (\ref{eq:assPardouxDrift})
and (\ref{eq:assPardouxVolat}) hold. Then the choice 
\begin{align*}
p_{\psi,0} & = & 0\\
p_{\psi,1} & = & 0\\
p_{\psi,2} & = & \frac{\beta_{f,1}}{2}\\
p_{\psi,3} & = & \beta_{f,1}\vee\frac{\beta_{f,2}}{2}\\
p_{\psi,4} & = & \frac{1}{2}\left(3\beta_{f,1}\vee\left(\beta_{f,1}+\beta_{f,2}.\right)\vee\beta_{f,3}\right)
\end{align*}
satisfies Equation \eqref{eq:PoissonDerivBounds}.

\end{lemma}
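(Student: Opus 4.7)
The plan is to prove the bounds inductively by applying Theorem \ref{thm:pardouxPoisson} iteratively to the nested sequence of Poisson equations \eqref{eq:PoiDeriv0}--\eqref{eq:PoiDeriv3}, each of which expresses a derivative of $\psi$ as the solution of a Poisson equation with a right-hand side built from lower-order derivatives of $\psi$ and derivatives of $f$.

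First I would handle the base case $k=0,1$. Since $\phi$ is bounded, part (1) of Theorem \ref{thm:pardouxPoisson} applied to \eqref{eq:PoiDeriv0} gives $\psi$ bounded and $\|\nabla\psi\| \le C$, which yields $p_{\psi,0}=p_{\psi,1}=0$ because $V(\theta)=\|\theta\|^2+1 \ge 1$. The nontrivial step is verifying that the centering condition $\overline{\phi - \bar\phi}=0$ needed to invoke Theorem \ref{thm:pardouxPoisson} holds at each iteration, and that the forcing at each stage has at most polynomial growth.

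Next I would induct on $k=2,3,4$. For $k=2$, I treat \eqref{eq:PoiDeriv1} as a Poisson equation with unknown $\partial_i\psi$ and forcing $\partial_i\phi-\tfrac12 \nabla\psi\cdot\partial_i f$. Using the notation \eqref{eq:betanotation} and the already-established $\|\nabla\psi\|\lesssim 1$, the forcing grows like $|\theta|^{\beta_{f,1}}$, so part (2) of Theorem \ref{thm:pardouxPoisson} gives $\|\nabla\partial_i\psi\|\lesssim (1+|\theta|^{\beta_{f,1}})\lesssim V^{\beta_{f,1}/2}$, hence $p_{\psi,2}=\beta_{f,1}/2$. For $k=3$, I apply the same strategy to \eqref{eq:PoiDeriv2}; the forcing now contains a term $\nabla\partial_j\psi\cdot\partial_i f$ of size $|\theta|^{2\beta_{f,1}}$ and a term $\nabla\psi\cdot\partial_{ij}f$ of size $|\theta|^{\beta_{f,2}}$, so the overall growth exponent is $2\beta_{f,1}\vee\beta_{f,2}$, giving $p_{\psi,3}=\beta_{f,1}\vee\beta_{f,2}/2$. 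For $k=4$, I apply the theorem to \eqref{eq:PoiDeriv3}; tracking each product term gives forcing growth $3\beta_{f,1}\vee(\beta_{f,1}+\beta_{f,2})\vee\beta_{f,3}$, and dividing by two (since $V\asymp |\theta|^2$) produces the stated $p_{\psi,4}$.

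The routine but slightly delicate part will be the bookkeeping at $k=3$ and $k=4$: collecting all product terms $\nabla^{j}\psi\cdot\nabla^{k}f$ appearing in the forcing and taking the maximum of their growth exponents, recalling that the previously proven $p_{\psi,j}$ translate into growth $|\theta|^{2p_{\psi,j}}$ in pointwise bounds. The main conceptual obstacle is confirming that Theorem \ref{thm:pardouxPoisson} remains applicable at each iteration, which requires (i) that the differentiated right-hand sides are locally regular enough to admit $W^{2}_{\mathrm{loc}}$ solutions and (ii) that each of these RHSs has zero $\pi$-mean; the latter follows by differentiating the identity $\int \mathcal{L}\psi\, d\pi = 0$ and using integration by parts against the invariant density, once regularity of $\pi$ is in hand.
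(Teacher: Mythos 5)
Your proposal follows essentially the same route as the paper: iterate Theorem \ref{thm:pardouxPoisson} over the nested Poisson equations \eqref{eq:PoiDeriv0}--\eqref{eq:PoiDeriv3}, track the polynomial growth of each forcing term, and convert the exponents $\beta_{\psi,k}$ into $p_{\psi,k}=\beta_{\psi,k}/2$ via $V=\|\theta\|^2+1$; your bookkeeping at $k=3,4$ matches the paper's exactly. The only difference is that you explicitly flag the centering and local-regularity conditions needed to reapply the theorem at each stage, which the paper takes for granted.
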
 \begin{proof} Assumption \ref{assu:fromPaper1} yields
that $\beta_{f,0}=1$ is a valid choice. Applying Theorem \ref{thm:pardouxPoisson}
to Equation (\ref{eq:PoiDeriv0}) implies that $\beta_{\psi,0}:=\beta_{\psi,1}:=0$
satisfies Equation (\ref{eq:betanotation}). Applying Theorem \ref{thm:pardouxPoisson}
to Equation (\ref{eq:PoiDeriv1}) yields that 
\[
\beta_{\psi,2}\leq\beta_{f,1}.
\]

Applying Theorem \ref{thm:pardouxPoisson} to Equation (\ref{eq:PoiDeriv2})
yields that 
\[
\beta_{\psi,3}\leq2\beta_{f,1}\vee\beta_{f,2}.
\]
Applying Theorem \ref{thm:pardouxPoisson} to Equation (\ref{eq:PoiDeriv3})
yields that 
\begin{eqnarray*}
\beta_{\psi,4} & \leq & \left(\beta_{f,1}+\left(2\beta_{f,1}\vee\beta_{f,2}.\right)\right)\vee\left(\beta_{f,1}+\beta_{f,2}.\right)\vee\beta_{f,3}\\
 & \leq & 3\beta_{f,1}\vee\left(\beta_{f,1}+\beta_{f,2}.\right)\vee\beta_{f,3}.
\end{eqnarray*}
Thus we have established Equation (\ref{eq:PoissonDerivBounds}).
\end{proof}

\global\long\def\theequation{B.\arabic{equation}}
 \global\long\def\theremark{B.\arabic{remark}}
 \global\long\def\thelemma{B.\arabic{lemma}}
  \setcounter{equation}{0}

\global\long\def\theequation{C.\arabic{equation}}
 \global\long\def\theremark{C.\arabic{remark}}
 \global\long\def\thelemma{C.\arabic{lemma}}
  \setcounter{equation}{0}  

\section{Analytic expressions for the Gaussian Toy Model \label{app:Toy}}
We sketch the derivation of the analytic expression of the MSE are used for the plots in Section \ref{sec:anal_toy}.

\subsection{Expected MSE for fixed Data Sets\label{sec:appToyFixedModels}}

We outline how an analytic expression for the MSE of the
sample average can be derived. The following method generalises to
any polynomial test function but we concentrate on the sample average
for the second moment of the posterior given by 
$
S_{2}=\frac{1}{M}\sum_{j=0}^{M-1}\theta_{j}^{2}.
$
Its MSE can be expressed using Equation \eqref{eq:SGaussianPosterior}
\begin{equation}
MSE=\IE\left(\frac{1}{M}\sum_{k=0}^{M-1}\theta_{j}^{2}-(\mu_{p}^{2}+\sigma_{p}^{2})\right)^{2}=\IE S_{2}^{2}-2\IE S_{2}\left(\mu_{p}^{2}+\sigma_{p}^{2}\right)+\left(\mu_{p}^{2}+\sigma_{p}^{2}\right)^{2}.\label{eq:ToyMSE}
\end{equation}

In order to express $\IE S_{2}^{2}$ in Equation \eqref{eq:ToyMSE}
we derive the recurrence equations for $\IE\theta_{j}^{i}$ for $i=1,\dots,4$
by taking the expectations of 
\begin{align}
\theta_{j+1} & =  (1-A\, h)\theta_{j}+hB_{j}+\sqrt{h}\eta_{j}\nonumber \\
\theta_{j+1}^{2} & =  (1-A\, h)^{2}\theta_{j}^{2}+h^{2}B_{j}^{2}+h\eta_{j}^{2}+2(1-A\, h)\theta_{j}hB_{j}+2(1-A\, h)\theta_{j}\sqrt{h}\eta_{j}+2hB_{j}\sqrt{h}\eta_{j}\label{eq:RecursionTh2}\\
\theta_{j+1}^{3} & =  \left((1-A\, h)\theta_{j}+hB_{j}+\sqrt{h}\eta_{j}\right)^{3}\nonumber \\
 & =  (1-A\, h)^{3}\theta_{j}^{3}+3(1-A\, h)^{2}\theta_{j}^{2}hB_{j}+3(1-A\, h)\theta_{j}h^{2}B_{j}^{2}+h^{3}B_{j}^{3}\nonumber \\
 &   +3\sqrt{h}\eta\left(\dots\right)+3\eta^{2}h\left((1-A\, h)\theta_{j}+hB_{j}\right)+\eta^{3}h^{\frac{3}{2}}\nonumber \\
\theta_{j+1}^{4} & =  (1-A\, h)^{4}\theta_{j}^{4}+4(1-A\, h)^{3}\theta_{j}^{3}hB_{j}+6(1-A\, h)^{2}\theta_{j}^{2}h^{2}B_{j}^{2}\\
 &   +4(1-A\, h)\theta_{j}h^{3}B_{j}^{3}+h^{4}B_{j}^{4}+4\eta\left(\dots\right)+4\eta^{3}\left(\dots\right)\\
 &   +6h\eta^{2}\left((1-A\, h)^{2}\theta_{j}^{2}+2(1-A\, h)\theta_{j}hB_{j}+h^{2}B_{j}^{2}\right)+\eta^{4}h^{2}.
\end{align}
The recurrent equation for $\IE\theta_{j}$ is linear and first order
and can therefore be solved explicitly. Plugging the result into the
equation for $\IE\theta_{j}^{2}$ turns it into a first order linear
equation as well. Repeating this processes yields explicit expressions
for $\IE\theta_{j}^{i}$ $i=1,\dots,4$.The sums can be carried out
explicitly because the terms are of the form of a geometric sum or
a geometric term with a polynomial factor. This allows us to obtain
an analytic expression for $\IE S_{2}^{2}$ by reducing it to $\IE\theta_{j}^{i}$
as follows 
\begin{equation}
\IE S_{2}^{2}=\frac{1}{M^{2}}\IE\left(\sum_{i=0}^{M-1}\theta_{i}^{4}+2\sum_{i=0}^{M-1}\theta_{i}^{2}\sum_{j=i+1}^{M-1}\theta_{j}^{2}\right).\label{eq:toyMSEergAvgSq}
\end{equation}
The cross terms can be removed using Equation (\ref{eq:RecursionTh2})
so that 
\begin{eqnarray}
\theta_{j}^{2} & = & \left(1-Ah\right)^{2(j-i)}\theta_{i}^{2}+\sum_{k=0}^{j-1-i}\left(1-Ah\right)^{2k}\label{eq:toyCrossTerms}\\
 &  & \left[h^{2}B_{j-1-k}^{2}+h^{2}\eta_{j-1-k}^{2}+2\left(1-Ah\right)B_{j-1-k}h\theta_{j-1-k}+\right.\nonumber \\
 &  & \left.+2\left(1-Ah\right)\eta_{j-1-k}\theta_{j-1-k}\sqrt{h}+2h^{\frac{3}{2}}B_{j-1-k}\eta_{j-1-k}\right].\nonumber 
\end{eqnarray}
Plugging this into Equation \eqref{eq:toyMSEergAvgSq} yields

\begin{eqnarray*}
\IE S_{2}^{2} & = & \IE\frac{1}{M^{2}}\sum_{i=0}^{M-1}\theta_{i}^{4}\left(1+2\sum_{j=i+1}^{M-1}\left(1-Ah\right)^{2(j-i)}\right)\\
 &  & +\IE\frac{1}{M^{2}}\sum_{i=0}^{M-1}\theta_{i}^{2}\sum_{j=i+1}^{M-1}\sum_{k=0}^{j-1-i}\left(1-Ah\right)^{2k}\\
 &  & \quad\left[h^{2}B_{j-1-k}^{2}+h\eta_{j-1-k}^{2}+2\left(1-Ah\right)B_{j-1-k}h\theta_{j-1-k}\right.\\
 &  & \quad\left.+2\left(1-Ah\right)\eta_{j-1-k}\sqrt{h}+2h^{\frac{3}{2}}B_{j-1-k}\eta_{j-1-k}\right].
\end{eqnarray*}
Using the recurrence Equation to express $\theta_{j-1-k}$ in terms
of $\theta_{i}$ we conclude that $\IE S_{2}^{2}$ is equal to 
\begin{align*}
 & \frac{1}{M^{2}}\sum_{i=0}^{M-1}\IE\theta_{i}^{4}\left(1+2\sum_{j=i+1}^{M-1}\left(1-Ah\right)^{j-i}\right)\\
 & +\frac{1}{M^{2}}\sum_{i=0}^{M-1}\IE\theta_{i}^{2}\sum_{j=i+1}^{M-1}\sum_{k=0}^{j-1-i}\left(1-Ah\right)^{2k}\left[h^{2}\IE B^{2}+h\right].\\
 & +\IE\frac{1}{M^{2}}\sum_{i=0}^{M-1}\theta_{i}^{2}\sum_{j=i+1}^{M-1}\sum_{k=0}^{j-1-i}\left(1-Ah\right)^{2k}2\left(1-Ah\right)B_{j-1-k}h\\
 & \left((1-Ah)^{(j-1-k)-i}\theta_{i}+\sum_{l=0}^{(j-1-k)-i-1}(1-Ah)^{l}\left(hB_{j-1-k-l-1}+\sqrt{h}\eta_{j-1-k-l-1}\right)\right)\\
\end{align*}
Taking the expectations into the sum yields 
\begin{eqnarray*}
\IE S_{2}^{2} & = & \frac{1}{M^{2}}\sum_{i=0}^{M-1}\IE\theta_{i}^{4}\left(1+2\sum_{j=i+1}^{M-1}\left(1-Ah\right)^{j-i}\right)\\
 &  & +\frac{1}{M^{2}}\sum_{i=0}^{M-1}\IE\theta_{i}^{2}\sum_{j=i+1}^{M-1}\sum_{k=0}^{j-1-i}\left(1-Ah\right)^{2k}\left[h^{2}\IE B^{2}+h\right].\\
 &  & +\IE\frac{1}{M^{2}}\sum_{i=0}^{M-1}\IE\theta_{i}^{3}\sum_{j=i+1}^{M-1}\sum_{k=0}^{j-1-i}\left(1-Ah\right)^{2k}2\left(1-Ah\right)\IE Bh(1-Ah)^{(j-1-k)-i}\\
 &  & +\frac{1}{M^{2}}\sum_{i=0}^{M-1}\IE\theta_{i}^{2}\sum_{j=i+1}^{M-1}\sum_{k=0}^{j-1-i}\left(1-Ah\right)^{2k}2\left(1-Ah\right)\IE Bh\sum_{l=0}^{(j-1-k)-i-1}(1-Ah)^{l}h\IE B.
\end{eqnarray*}
We have an expressions for $\IE B$ but in the following we derive
the expressions for $\IE B^{2}$ required to express $\IE S_{2}^{2}$.
The terms $\IE B^{3}$ and $\IE B^{4}$ are needed for the derivation
of $\IE\theta_{j}^{4}$. In order to derive expressions for $\IE B^{p}$,
we introduce the power sums 
\[
p_{k}=\sum_{i=1}^{N}X_{i}^{k}
\]
and the elementary symmetric polynomials 
\begin{equation}
e_{0}  =1, e_{1}  =\sum_{i=1}^N X_{i}, 
e_{2}  {\textstyle =\sum_{1\leq i<j\leq N}X_{i}X_{j},},
\dots, e_{N}  =\prod_{i=1}^{N}X_{i}.
\end{equation}
Computing $e_{i}$ naively has complexity of order $\mathcal{O}\left(N^{i}\right)$
for $i\ll N$. Using Newton's identities
\begin{equation*}
e_{1}  =p_{1}, e_{2}  =\frac{1}{2}\left(e_{1}p_{1}-p_{2}\right), e_{3}  =\frac{1}{3}\left(-e_{1}p_{2}+e_{2}p_{1}+p_{3}\right), e_{4}  =\frac{1}{4}\left(e_{1}p_{3}-e_{2}p_{2}+e_{3}p_{1}-p_{4}\right)
\end{equation*}
$e_{i}$ can be expressed in terms of $p_{k}$, $k\leq i$ each of
which can be computed with complexity of order $\mathcal{O}\left(N\right)$. 

We consider the term $B=\frac{N}{n}\frac{\sum_{i=1}^{n}X_{\tau_{i}}}{2\sigma_{x}^{2}}$
where $\tau_{i}$ are sampled with replacement from a fixed data set
$\left\{ 1,\dots,N\right\} $. The second moment can be calculated
as follows 
\begin{eqnarray*}
\IE B^{2} & = & \left(\frac{N}{n2\sigma_{x}^{2}}\right)^{2}\sum_{i,j}\IE X_{\tau_{i}}X_{\tau_{j}}\\
 & = & \left(\frac{N}{n2\sigma_{x}^{2}}\right)^{2}\left(n(n-1)\underbrace{\IE X_{\tau_{1}}X_{\tau_{2}}}_{\text{Mom}_{2,1}}+n\underbrace{\IE X_{\tau_{1}}X_{\tau_{1}}}_{\text{Mom}_{2,2}}\right).
\end{eqnarray*}
We use Newton's identities to express $\text{Mom}_{2,1}$ and $\text{Mom}_{2,2}$
\begin{equation*}
\text{Mom}_{2,1}  =  \frac{2e_{2}}{N(N-1)}\quad \text{Mom}_{2,2}  =  \frac{p_{2}}{N}.
\end{equation*}
Similarly, we obtain
\begin{eqnarray*}
\IE B^{3} & = & \left(\frac{N}{n2\sigma_{x}^{2}}\right)^{3}\sum_{i,j,k}\IE X_{\tau_{i}}X_{\tau_{j}}X_{\tau_{k}}\\
 & = & \left(\frac{N}{n2\sigma_{x}^{2}}\right)^{3}\left(n(n-1)(n-2)\underbrace{\IE X_{\tau_{1}}X_{\tau_{2}}X_{\tau_{3}}}_{\text{Mom}_{3,1}}+3n(n-1)\underbrace{\IE X_{\tau_{1}}X_{\tau_{2}}^{2}}_{\text{Mom}_{3,2}}+n\underbrace{\IE X_{\tau_{1}}^{3}}_{\text{Mom}_{3,3}}\right)\\
\text{Mom}_{3,1} & = & \frac{\sum_{i\neq j\neq l}X_{i}X_{j}X_{l}}{N(N-1)(N-2)}=\frac{6e_{3}}{N(N-1)(N-2)}\\
\text{Mom}_{3,2} & = & \frac{p_{1}p_{2}-p_{3}}{N(N-1)},\quad \text{Mom}_{3,3}  =  \frac{p_{3}}{N}.
\end{eqnarray*}
A similar calculation yields a representation of $\IE B^{4}$ in terms of $p_1,\dots,p_4.$

\subsection{Expected MSE for Random Data\label{sub:appToyAnalyticNtoInfty}}

We sketch the derivation of the  MSE
\[
\mathbb{E}_{\theta,X}\left(\frac{1}{M}\sum_{k=0}^{M-1}\theta_{j}^{2}-(\mu_{p}^{2}+\sigma_{p}^{2})\right)^{2}
\]
where we take expectation with respect to $X_{i}\overset{\text{i.i.d.}}{\sim}\mathcal{N}\left(\theta^{\dagger},\sigma_{X}^{2}\right)$
for $i=1,\dots,N$ and the randomness in the recursion for $\theta_{j}$.
We obtain an analytic expression for the MSE by deriving expressions
for $\mathbb{E}_{X,\theta}\theta_{j}^{p}$. We illustrate the computation
for $p=2$, noting that we assume $\theta_{0}=0\text{ a.s.}$. We
know that 
\begin{eqnarray}
\theta_{j}^{2} & = & \sum_{k=0}^{j-1}\left(1-Ah\right)^{2k}\left[h^{2}B_{j-1-k}^{2}+h\eta_{j-1-k}^{2}+2\left(1-Ah\right)\eta_{j-1-k}\theta_{j-1-k}\sqrt{h}\right.\label{eq:toyThetaSq}\\
 &  & \left.+2h^{\frac{3}{2}}B_{j-1-k}\eta_{j-1-k}+2\left(1-Ah\right)B_{j-1-k}h\theta_{j-1-k}\right]\nonumber \\
 & = & \sum_{k=0}^{j-1}\left(1-Ah\right)^{2k}\left[h^{2}B_{j-1-k}^{2}+2\left(1-Ah\right)\eta_{j-1-k}\theta_{j-1-k}\sqrt{h}+2h^{\frac{3}{2}}B_{j-1-k}\eta_{j-1-k}\right.\\
 &  & \left.h\eta_{j-1-k}^{2}+2\left(1-Ah\right)B_{j-1-k}h\sum_{l=0}^{j-1-k-1}(1-Ah)^{l}\left(hB_{j-k-2-l}+\sqrt{h}\eta_{j-k-2-l}\right)\right].
\end{eqnarray}
The expectation $\mathbb{E}_{X,\theta}\,\theta_{j}$ therefore boils
down to calculating $\IE B^{2}$, $\IE B\, B^{\prime}$ and $\IE B$
where $B$ and $B^{\prime}$ are independent samples of Equation \eqref{eq:toyBdistribution}.
We start by calculating

\begin{eqnarray*}
\IE BB^{\prime} & = & \frac{N^{2}}{n^{2}4\sigma_{x}^{4}}\sum_{i=1}^{n}\sum_{j=1}^{n}\IE X_{\tau_{i}}X_{\tilde{\tau}_{j}}\\
 & = & \frac{N^{2}}{n^{2}4\sigma_{x}^{4}}\sum_{i=1}^{n}\sum_{j=1}^{n}\left(\frac{1}{N}\IE X^{2}+\frac{N-1}{N}\IE X\tilde{X}\right)\\
 & = & \frac{N^{2}}{n^{2}4\sigma_{x}^{4}}\mbox{\ensuremath{\left(\sum_{i=1}^{n}\sum_{j=1}^{n}\frac{1}{N}(\theta^{\dagger}{}^{2}+\sigma_{X}^{2})+\frac{N-1}{N}\theta^{\dagger}{}^{2}\right)}}.
\end{eqnarray*}
Similarly, we obtain 
\[
\IE B^{2}=\frac{N^{2}}{n^{2}4\sigma_{x}^{4}}\frac{n(\theta^{\dagger2}+\sigma_{\theta}^{2})+n(n-1)\theta^{\dagger2}}{1}.
\]
Deriving $\mathbb{E}\theta_{j}^{p}$ for $p=1,2,3,4$ requires the
calculation of $\mathbb{E}B_{1}^{\alpha_{1}}B_{2}^{\alpha_{2}}B_{3}^{\alpha_{3}}B_{4}^{\alpha_{4}}$
where $B_{i}$ are i.i.d. following the distribution of Equation \eqref{eq:toyBdistribution}
for $\alpha_{i}\ge0$ and $\sum_{i}\alpha_{i}\leq4$. The arguments
so far allow us to derive $T_{1}$ and $T_{\text{3}}$ in 
\begin{eqnarray*}
MSE & = & \IE\left(S_{2}^{2}-2S_{2}(\mu_{p}^{2}+\sigma_{p}^{2})+(\mu_{p}^{2}+\sigma_{p}^{2})^{2}\right)\\
 & = & \underbrace{\IE S_{2}^{2}}_{T_{1}}-2\underbrace{\IE S_{2}\mu_{p}^{2}}_{T_{2}}-2\underbrace{\IE S_{2}\sigma_{p}^{2}}_{T_{3}}+\underbrace{\IE\left(\mu_{p}^{4}+2\mu_{p}^{2}\sigma_{p}^{2}+\sigma_{p}^{4}\right)}_{T_{4}}.
\end{eqnarray*}
Recall that the posterior for this toy model is given by 
\[
\mathcal{N}(\mu_{p},\sigma_{p}^{2})=\mathcal{N}\left(\frac{\sum_{i=1}^{N}X_{i}}{\frac{\sigma_{x}^{2}}{\sigma_{\theta}^{2}}+N},\left(\frac{1}{\sigma_{\theta}^{2}}+\frac{N}{\sigma_{x}^{2}}\right)^{-1}\right)
\]
and hence $T_{4}$ can be computed explicitly. The summands of $T_{2}$
can be derived similarly to Equation \eqref{eq:toyThetaSq} in terms
of the quantities $\mathbb{E}B\mu_{p}^{2}$, $\mathbb{E}B^{2}\mu_{p}^{2}$
and $\mathbb{E}BB^{\prime}\mu_{p}^{2}.$ The explicit expression can
be obtained from the supplemented Mathematica\textsuperscript{\textregistered} file.

\end{document}